\crefname{hypothesis}{Hypothesis}{Hypotheses}
\title{On Two-Handed Planar Assembly Partitioning with Connectivity Constraints\thanks{%
A previous version of this paper appeared in the proceedings of SODA 2021~\cite{DBLP:conf/soda/AgarwalAGH21}.
In this version we generalize our connected-assembly-partitioning algorithm for unit-grid squares to polygonal assemblies.
\funding{P. K. Agarwal has been partially
			supported by NSF grants IIS-18-14493 and CCF-20-07556.
   B.~Aronov has been partially supported by NSF grants CCF-15-40656 and CCF-20-08551, and by grant 2014/170 from the US-Israel Binational Science Foundation.
   Work on this paper by T.~Geft and D.~Halperin has been supported in part by the Israel Science Foundation (grant no.~1736/19), by NSF/US-Israel-BSF (grant no.~2019754), by the Israel Ministry of Science and Technology (grant no.~103129), by the Blavatnik Computer Science Research Fund, and by the Yandex Machine Learning Initiative for Machine Learning at Tel Aviv University.
   T.~Geft has also been supported by a scholarship from the Shlomo Shmeltzer Institute for Smart Transportation at Tel Aviv University.
   }}}
\author{Pankaj K. Agarwal\thanks{Department of Computer Science, Duke University, USA 
  (\email{pankaj@cs.duke.edu}).}
\and Boris Aronov\thanks{Department of Computer Science and Engineering,
			Tandon School of Engineering, New York
			University, Brooklyn, NY 11201, USA
  (\email{boris.aronov@nyu.edu}).}
\and Tzvika Geft\thanks{Blavatnik School of Computer Science, Tel-Aviv University, Israel
  (\email{zvigreg@mail.tau.ac.il}, \email{danha@tauex.tau.ac.il}).}
\and Dan Halperin\footnotemark[4]}
\newcommand{\Xcite}[1]{{\let\cite@adjust\empty\cite{#1}}}
\definecolor{gray}{rgb}{0.35,0.35,0.35}
\definecolor{blue}{rgb}{0,0,1}
\definecolor{red}{rgb}{1,0,0}
\definecolor{orange}{rgb}{0.75, 0.4, 0}
\definecolor{green}{rgb}{0.0, 0.5, 0.0}
\newcommand{\newrestriction}{with neighboring variable pairs}
\newcommand{\newsat}{Monotone Planar 3-SAT with Neighboring Variable Pairs}
\newcommand{\newsatCap}{Monotone Planar 3-SAT with Neighboring Variable Pairs}
\newcommand{\newsatshort}{\textsc{MP-3SAT-NVP}}
\newcommand{\gridsq}{unit-grid square}
\newcommand{\gridass}{grid square assembly}
\newcommand{\nvars}[1]{\ensuremath{\textsc{Nbr\!Vars}(#1)}}
\newcommand{\nvarstext}{neighboring variables}
\newcommand{\rewire}[1]{\textsc{Replace}\ensuremath{(#1)}}
\newcommand{\set}[1]{\ensuremath{\{#1\}}}
\newcommand{\dirup}{the $+y$-direction}
\newcommand{\dirdown}{the $-y$-direction}
\newcommand{\clause}[3]{\ensuremath{(#1 \lor #2 \lor #3)}}
\newcommand{\clausetwo}[2]{\ensuremath{(#1 \lor #2)}}
\newcommand{\clausea}[1]{\ensuremath{a_{#1}}}
\newcommand{\clauseb}[1]{\ensuremath{b_{#1}}}
\newcommand{\clausec}[1]{\ensuremath{c_{#1}}}
\newcommand{\claused}[1]{\ensuremath{d_{#1}}}
\newcommand{\agraph}{\mathcal{G}}
\newcommand{\bgraph}{\mathcal{H}}
\newcommand{\rgraph}{\agraph_{|B}}
\newcommand{\ograph}{\rgraph^\sqcup}
\newcommand{\compl}[1]{\overline{#1}}
\newcommand{\dZ}{\mathbb{Z}}
\def\ph{\varphi}
\newcommand{\Sh}{\mathop{\mathsf{Sh}}}
\let\bd\partial
\newcommand{\true}{\textsc{true}}
\newcommand{\false}{\textsc{false}}
\newcommand{\ass}{\ensuremath{\mathcal{A}(\up{})}}
\newcommand{\up}{\ensuremath{\mathit{UP}}}
\newcommand{\down}{\ensuremath{\mathit{DOWN}}}
\newcommand{\algname}{\textsc{Augment}}
\newcommand{\subprocname}{\textsc{Connect}}
\newcommand{\nodef}[1]{\ensuremath{\mathcal{V}(#1)}}
\newcommand{\CS}{\mathop{\mathsf{CS}}}
\newcommand{\monass}{horizontally monotone}
\begin{document}

\maketitle

\begin{abstract}
Assembly planning is a fundamental problem in robotics and automation, which involves designing a sequence of motions to
bring the separate constituent parts of a product into their final
placement in the product.
Assembly planning is naturally cast as a disassembly problem, giving rise to the \emph{assembly partitioning} problem:
Given a set $A$ of parts, find a subset $S\subset A$, referred to as a subassembly, such that $S$ can be rigidly translated to 
infinity along a prescribed direction without colliding with $A\setminus S$.
While assembly partitioning is efficiently solvable, it is further desirable for the parts of a subassembly to be easily held together.
This motivates the problem that we study, called \emph{connected-assembly-partitioning}, which additionally requires each of the two subassemblies, $S$ and $A\setminus S$, to be
connected. 
We show that this problem is NP-complete, settling an open question posed by Wilson et al.~(1995) a quarter of a 
century ago,
even when $A$ consists of unit-grid squares (i.e., $A$ is polyomino-shaped).
Towards this result, we prove the NP-hardness of a new \textsc{Planar 3-SAT} variant having an adjacency requirement for variables appearing in the same clause, which may be of independent interest.
On the positive side, we give an $O(2^k n^2)$-time fixed-parameter tractable algorithm (requiring low degree polynomial-time pre-processing) for an assembly $A$ consisting of polygons in the plane, where $n=|A|$ and $k=|S|$.
We also describe
a special case of unit-grid square assemblies, where a connected partition can always be found
in $O(n)$-time.
\end{abstract}

\begin{keywords}
assembly planning, assembly sequencing, assembly partitioning, NP-hardness, parameterized complexity, planar SAT
\end{keywords}

\begin{MSCcodes}
68T40,
68Q17,
68Q25,
68U05,
68W40
\end{MSCcodes}

\section{Introduction}
\label{sec:intro}

Automating assembly has been a fundamental research area in robotics since its early days.
A key problem in this area is \emph{assembly planning} (also called \emph{assembly sequencing}), where the goal is to find a sequence of motions that merges initially separated parts into their final relative positions in an assembly.
The problem's output is called an \emph{assembly sequence}.
Solving it would allow computer-aided design (CAD)
systems to provide better feedback to designers to help them create products
that are more cost-effective to manufacture~\cite{halperin2000general}.
It is convenient to study assembly planning
in reverse order, where the following key problem, \emph{assembly
partitioning}, arises: Given a set of parts in their final placement in a product, partition them into two subsets, each regarded as a rigid
body and referred to as a \emph{subassembly}, such that these two subassemblies
can be moved sufficiently far away from each other without colliding
with one another.
By recursively applying assembly partitioning, one can obtain a \emph{two-handed} assembly sequence, where exactly two rigid subassemblies are mated at each step to result in a larger one.
We study such sequences since they are ubiquitous in assembly planning.

\begin{figure}[htb]
	\centering
	\includegraphics[width=0.35\textwidth]{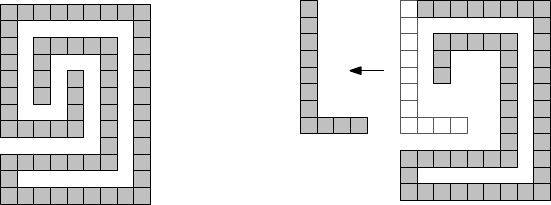}
	\caption{A negative (left) and positive (right) instance of the connected-assembly-partitioning problem. In each instance, the assembly is composed of unit-grid squares and $\vec{d}$ (the prescribed translation direction) is the $-x$-direction.
	The right instance is shown after the subassembly $S$ (which results in a connected partition) has been translated leftward.
	}
	\label{fig:intro}
\end{figure}
	
We tackle a prevalent constraint, namely \textit{connectivity}, which requires that each subassembly constructed during the sequence remains connected (such assembly sequences have also been called \textit{contact-coherent}~\cite{DBLP:conf/icra/Wolter91}).
A subassembly is \textit{connected} if the graph representing contact between parts is connected; see below for a formal definition of the graph and refer to Figure~\ref{fig:intro}.
Subassemblies that are not connected are more difficult to manipulate and mate with other subassemblies; e.g., additional fixturing might be needed.
Therefore, requiring subassemblies to be connected is a common constraint in assembly planning~\cite{DBLP:journals/ijrr/WilsonKLL95, halperin2000general, DBLP:journals/tase/GunjiDBB18, deepak2019assembly, munker2022cad}.
An approach that has been typically used to solve assembly planning with the connectivity constraint is "enumerate-and-test": partitions that result in connected subassemblies are enumerated and then tested for the geometric feasibility of separating the subassemblies~\cite{gen-and-test, thomas2003efficient, munker2022cad}. %
As the number of ways to partition an assembly into two connected subassemblies may be exponential, this approach does not have a polynomial running time guarantee. %

Although assembly planning can be performed offline for a given assembly, efficient algorithms are needed to facilitate computer-assisted design for assembly (DfA).
A common motivation here is the need to provide designers with immediate feedback.
This facilitates work in an interactive optimization cycle where small changes in geometry are made and evaluated.
For complex products, it may be hard to take the assemblability requirement into account in the early stages of the design.
Therefore, assembly planning is fundamental for optimizing not only the manufacturability of the product but also the design process itself~\cite{jimenez2013survey}.

Achieving a high level of automation in assembly planning is hindered by the multitude of practical considerations~\cite{constraints-survey} and the combinatorial explosion inherent to the underlying algorithmic problem.
In its most general form, assembly planning is PSPACE-hard~\cite{pspace}.
Yet even the solvable case of \emph{one-step translations}, i.e., where subassemblies are mated using a \emph{single} translation, allows an exponential number of sequences to arise~\cite{DBLP:conf/case/GeftTGH19}.
As a result, satisfying all the desired constraints means that manual involvement of highly-skilled users is typically required~\cite{AutoAssem, DBLP:journals/tase/GunjiDBB18}.
Such limitations and the great desire for a unified solution to assembly planning have been summarized in a recent review:
"A robust and efficient computer-aided methodology is much needed to generate optimal assembly sequences automatically from 3D CAD environment considering all the necessary assembly predicates without any user intervention"~\cite{bahubalendruni2016review}.
In this work, we advance toward this goal by simultaneously considering geometric feasibility and connectivity, which are arguably the two most fundamental constraints. %

We focus on \textit{planar assemblies}, where the parts are polygons in the plane.
For such assemblies, where we do not have the connectivity constraint,
assembly partitioning with a single translation is efficiently solvable~\cite{halperin2000general, wilson1994geometric}. On the other hand, a generalization to multiple translations and other motions was shown to make the problem NP-complete~\cite{DBLP:journals/ipl/KavrakiLW93}.
With the connectivity constraint, for multiple translations, Kavraki and Kolountzakis~\cite{kavraki1995partitioning} showed that the problem remains NP-complete.
However, the complexity of the (otherwise solvable) case of single translations has been posed as an open problem in 1995~\cite{DBLP:journals/ijrr/WilsonKLL95}.
The question has been recently reraised for the special case of polyominoes composed of unit-grid squares that arise in programmable matter~\cite{DBLP:journals/ral/SchmidtMHBF18, DBLP:conf/soda/Balanza-Martinez20}.
We settle this long-standing open question in the negative while also providing a fixed-parameter tractable algorithm for the problem.

\paragraph{Our contributions}
We study the \emph{connected-assembly-partitioning problem}, which is formally defined as follows.
An \emph{assembly} is a set $A$ of $n$ polygons with pairwise-disjoint interiors in the plane, also called \emph{parts}, 
in some given relative placement.
A \textit{subassembly} is a non-empty proper subset $S$ of the polygons composing an assembly $A$, 
in their relative placement in~$A$. 
Two polygons $P_i, P_j \in A$ are \textit{edge-connected} if $P_i\cap P_j$ contains a line segment of non-zero length.
The \textit{adjacency graph} of an assembly $A$, denoted by $\agraph(A)$, is an undirected graph that contains a node for each part in~$A$, and has an edge between two nodes if and only if the corresponding parts in $A$ are edge-connected.
For a subset $S\subseteq A$, let $\agraph(S)$ be the subgraph of $\agraph(A)$ induced by $S$.
A subassembly $S \subseteq A$ is called \textit{connected} if 
$\agraph(S)$ is connected.
Let $A$ denote an assembly and $\vec{d}$ denote a direction in the plane.
A subassembly $S \subset A$ is called a \textit{partition} of $A$ in direction $\vec{d}$ if $S$ can be rigidly translated arbitrarily far away along $\vec{d}$ without colliding with $A \setminus S$ (sliding of one subassembly along the other, namely motion in contact, is
allowed). A subassembly $S \subset A$ is called \textit{connected partition} if each of the subassemblies $S$ and $A \setminus S$ is connected (when viewed in isolation from each other). 

Given an assembly $A$ and a direction $\vec{d}$, our goal is to decide whether there is a connected partition $S \subset A$ in direction $\vec{d}$.
We refer to this problem as \textsc{Planar Partitioning into Connected Subassemblies using a Single Translation (PPCST)}.

Our first result is that \textsc{PPCST} is NP-complete even for the special case when $A$ 
is a set of unit-grid squares, i.e., $A$ is polyomino-shaped (see Section~\ref{sec:hardness-assembly} for a precise definition).
Our setting is considerably more constrained than the previously most restricted NP-complete variant of the problem, which was studied by Kavraki and Kolountzakis~\cite{kavraki1995partitioning}.
In their construction, the assembly has non-convex polygons, and the number of translations required to partition it (i.e., to bring the two subassemblies arbitrarily far apart) is linear in the number of parts.
We reduce \textsc{Monotone Planar 3-SAT}\footnote{The problem is also known as \textsc{Planar Monotone 3-SAT}.}~\cite{DBLP:conf/cocoon/BergK10}, which is known to be NP-complete, to our problem.
Our proof works in three stages.
First, we show in Section~\ref{sec:sat} that a restricted version of \textsc{Monotone Planar 3-SAT}, in which there is a total ordering of variables and each clause with three literals has two adjacent variables, is also NP-complete; we believe that this result is of independent interest.\footnote{The new SAT version was independently used to resolve several long-standing open problems in Tile Self-Assembly~\cite{DBLP:conf/icalp/CaballeroGSW22}.}
We refer to the latter problem as \textsc{\newsat} (\newsatshort).
The second stage, presented in Section~\ref{sec:hardness-assembly}, reduces the latter problem to the connected-assembly-partitioning problem.
In Section~\ref{ssec:polygon-construction} we present the reduction for the case in which each part of the assembly is a rectilinear polygon.
Then, in Section~\ref{subsec:square-construction}, we adapt this construction to the case in which each part is a \gridsq{}.

Our second result is an $O(N(n+\log N)+2^k n^2)$-time algorithm to determine whether a connected partition with $k$ 
polygons (i.e., $|S|=k$) exists; if so, the algorithm reports the subassembly (Section~\ref{sec:FPT}). Here $n=|A|$ and $N$ is the total number of vertices in $A$. From the robotic application perspective, targeting a small $k$ can be advantageous as the subassembly needs to be grasped and moved.
Such manipulations become harder as the number of parts in the subassembly grows. 
In particular, our algorithm shows that the \textsc{PPCST} problem is fixed-parameter tractable (FPT)~\cite{DBLP:books/sp/CyganFKLMPPS15} in the size of the partition. We remark that there has been considerable interest in fine-grained algorithm design for geometric problems; see, e.g.,~\cite{DBLP:conf/compgeom/AgrawalKL0Z20,DBLP:journals/talg/CabelloGKMR11,DBLP:conf/compgeom/EibenL20,DBLP:journals/comgeo/GiannopoulosKRW13} and the references therein.

In our setting we need to satisfy two requirements: (i) that the motion of one subassembly will be free from collision with the other subassembly, and (ii) that each subassembly will be connected. Satisfying each of these requirements separately is simple (for collision-free separation we use \emph{directional blocking graphs}~\cite{wilson1994geometric}).
The difficulty of the connected-assembly partitioning problem is the need to simultaneously satisfy both requirements. Our algorithm builds a partition incrementally, while alternating between the satisfaction of either requirement.
We do so using the bounded search tree method~\cite{DBLP:books/sp/CyganFKLMPPS15}.
Given a partial solution $S$, we identify at most two candidate subsets $S_1, S_2$, each containing $S$, such that any 
connected partition containing $S$ has to contain at least one of $S_1$ or $S_2$.
This gives (at most) two options of how to extend the current partial solution, which we try until we find a solution or find an invalid branch.

Finally, we describe an $O(n)$-time algorithm for a special class of unit-grid square assemblies, called \emph{horizontally monotone} assemblies, for which a connected partition always exists (Section~\ref{sec:positive_res}).
Assemblies in this class, which can be recognized in $O(n)$-time, do not have "interlocking hook" structures (such as those shown in Figure~\ref{fig:intro}, left), which allows finding a connected partition using case analysis.

\paragraph{Related work}
	Assembly planning is a well-studied problem in manufacturing and robotics.
	Early planners (i.e., algorithms for solving assembly planning) have resorted to a complete enumeration of all possible assembly operations in order to determine which of them are feasible~\cite{gen-and-test}.
	Others posed questions to a human expert in order to establish precedence between operations~\cite{user-queries}.
	Such initial efforts highlighted the critical role of efficient (automatic) geometric reasoning in assembly planning, thus giving rise to the assembly partitioning problem.
	In an early work, Arkin et al.~\cite{DBLP:conf/compgeom/ArkinCM89} relate assembly partitioning for polygons in the plane to computing monotone paths among polygonal obstacles.
	Their connection led to an efficient algorithm for one-step translations.
	Another early work presented an algorithm to efficiently compute a sequence of translations to separate two polygons~\cite{DBLP:journals/dcg/PollackSS88}.
	
	A significant advancement by Wilson and Latombe~\cite{wilson1994geometric} was the introduction of the \emph{non-directional blocking graph (NDBG)}, which allows efficiently representing all geometrically feasible partitions depending on the allowed assembly motions.
	For one-step translations and infinitesimal rigid motions, their approach avoids the inherent combinatorial trap in assembly planning and leads to polynomial time assembly partitioning algorithms in both 2D and 3D~\cite{halperin2000general}.
	Beyond addressing the feasibility question, the NDBG allows a systematic exploration of all possible partitions, which is useful for finding partitions that adhere to/optimize some criterion~\cite{DBLP:conf/case/GeftTGH19}.
	On the negative side, deciding whether a partition exists for an arbitrary number of translations (and some related variants) was shown to be NP-complete~\cite{DBLP:journals/ipl/KavrakiLW93}, even with the connectivity constraint~\cite{kavraki1995partitioning}.
	Nevertheless, one-step translations were generalized through efficient assembly partitioning algorithms for motions consisting of a constant number of translations~\cite{DBLP:journals/ar/HalperinW96, halperin2000general}.

	Given the algorithmic success in finding geometrically feasible assembly sequences, the next natural goal became finding sequences meeting additional desired properties or optimality criteria.
    Practical considerations may be connectivity, stability, and graspability of the subassemblies that arise in the assembly plan, to name just a few of a multitude of considerations~\cite{constraints-survey}.
    In turn, many optimization variants of assembly planning were examined by
	Goldwasser et al.~\cite{goldwasser1996complexity, goldwasser-thesis} for assembly sequences consisting of one-step translations, such as minimizing the number of directions used to bring parts into the assembly.
	They presented multiple hardness and inapproximability results (though with some pertaining to non-geometric generalizations of the problem), thereby highlighting the challenging computational nature of addressing more than the mere basic considerations.
	
	With these results in the backdrop, algorithmic progress in assembly planning has been slow since the highly active research period it saw in the 1990s.
    Given this slow progress, our work can be seen as a significant step forward in the development of efficient exact algorithms for assembly planning.
    A major underlying element of this advancement is the adoption of parameterized complexity, which has so far been rarely explored in assembly planning.
    Other lines of research have confronted the problem's inherent combinatorial explosion using soft computing methods, such as genetic algorithms~\cite{brodbeck2014automatic-genetic} and other heuristic methods~\cite{deepak2019assembly}.

	Recent years have seen an interest in algorithmic problems for the special case of unit-grid square assemblies.
	Motivated by programmable matter, in such problems unit squares represent tiles or particles that assemble in the micro- and nano-scale.
	In the \emph{tilt model}~\cite{DBLP:journals/algorithmica/BeckerFKKRSS20} such tiles are moved by applying a uniform global force on them.
	The model gives rise to essentially the same assembly partitioning problem that we study, namely with one-step translations and the connectivity constraint, for which several positive results were obtained.
    
    Schmidt et al.~\cite{DBLP:journals/ral/SchmidtMHBF18} give efficient connected-assembly-partitioning algorithms for grid square assemblies having convex holes (or no holes), where \emph{holes} are the bounded connected components of the assembly's complement in the plane.
    The algorithms' running times are $O(n^3 \log n)$ and $O(n^4 \log n)$ for the case of no holes and convex holes, respectively, where $n$ is the number of grid squares in the assembly.
    Building off of Schmidt et al.~\cite{DBLP:journals/ral/SchmidtMHBF18}, Balanza-Martinez et al.~\cite{DBLP:conf/soda/Balanza-Martinez20} show that a \emph{complete} assembly sequence can be found (if it exists) in $O(n^4 \log n)$-time for hole-free grid square assemblies.
    For such assemblies, they show that one may greedily find a connected partition, and recursively disassemble the resulting subassemblies.
    This is not true in general, since it is possible to make the "wrong" partition, where the resulting subassemblies cannot be further disassembled~\cite[Figure 4]{DBLP:journals/ral/ManzoorSLKKB17}.\footnote{This occurs due to the connectivity constraint. Without the constraint, it is easily seen that any partition does not hinder further disassembly.}
    
    Other efficient assembly sequencing algorithms (with the connectivity constraint) for grid square assemblies were obtained by restricting the allowed partitions in addition to the assembly's structure.
    Schmidt et al.~\cite{DBLP:journals/ral/SchmidtMHBF18} present an algorithm for assemblies with convex holes and straight cut partitions.
    For the case where one square at a time is added to the assembly, the problem is fixed-parameter tractable when parameterized by the number of holes~\cite{DBLP:journals/ral/ManzoorSLKKB17}.
    Whether this problem variant is NP-complete is unknown, though its natural generalization to 3D, with a unit cube added at a time, is NP-complete~\cite{DBLP:journals/algorithmica/BeckerFKKRSS20}.
	
	Our connected-assembly-partitioning algorithm handles more general cases than the aforementioned algorithms.
	In contrast to previous results, our algorithm places no restrictions on the partition nor on the assembly's structure, and it also applies to polygons.
	In particular, we allow arbitrarily-shaped holes in the assembly, which appear to be the source of complexity in the problem.
	This is evidenced by our hardness results and also by the fact that previous algorithms have relied on restrictions on such holes (or alternatively on the allowed partition).
	Furthermore, assembly sequencing specifically for assemblies with holes was recently raised as an open problem already for unit-grid squares~\cite{DBLP:conf/soda/Balanza-Martinez20}. %

\section{\newsatCap}
	\label{sec:sat}

	In this section we introduce a restricted version of \textsc{Monotone Planar 3-SAT}~\cite{DBLP:conf/cocoon/BergK10} and prove that even this restricted version is NP-complete. We begin with the definition of \textsc{Monotone Planar 3-SAT}.

	\paragraph{\textsc{Monotone planar 3-SAT}} Let $\phi = \bigwedge C_i$ be a \textsc{3-SAT} formula over $n$ variables, where each clause 
	$C_i$ is the disjunction of at most three\footnote{We remark that \textsc{3-SAT} formulas are sometimes defined as having \emph{exactly} three literals in each clause. We allow clauses with two literals since the restricted 3-SAT versions considered here use such clauses in their hardness reductions.} literals, each of which is either a variable or its negation.
	We call a clause with $k$ literals a $k$-clause.
	We consider the bipartite graph $G_\phi$ that contains a vertex for each variable and for each clause, and it 
	has an edge between a variable vertex
	and a clause vertex if and only if the variable appears in the clause.
	Lichtenstein \cite{DBLP:journals/siamcomp/Lichtenstein82} introduced the \textsc{Planar 3-SAT} problem, which requires $G_\phi$ to be planar, and showed that it is NP-complete. Furthermore, Knuth and Raghunatan \cite{DBLP:journals/siamdm/KnuthR92} have shown that the graph $G_\phi$ of a  \textsc{Planar 3-SAT} instance can be drawn in the so-called \textit{rectilinear} embedding.
	In this embedding, all vertices can be drawn as unit-height rectangles, with all the variable-vertex rectangles centered on a fixed horizontal strip called the \textit{variable row} and every edge is a vertical line segment, which does not cross any rectangles, as shown in Figure~\ref{fig:mpsat_example}. We denote the variables by $x_1, \ldots, x_n$, according to their left to right order on the variable row. \textsc{Planar 3-SAT} remains NP-complete when $G_\phi$ is given as a rectilinear embedding.
	
	\begin{figure}[htb]
		\centering
		\includegraphics[width=0.42\linewidth]{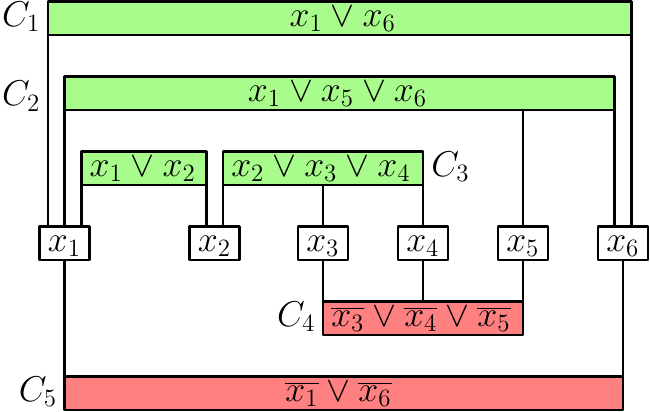}
		\caption{\textsc{Monotone Planar 3-SAT} instance in a rectilinear embedding, which is also an instance of \newsatshort{}.}
		\label{fig:mpsat_example}
	\end{figure} 
	
	A clause is \textit{positive (resp. negative)} if it contains only positive (resp. negative) literals. An instance of \textsc{3-SAT} is \textit{monotone} if it only has positive or negative clauses. \textsc{Monotone Planar 3-SAT} is a restriction of  \textsc{Planar 3-SAT} to monotone instances in which the rectilinear embedding of $G_\phi$ has all positive clause-vertices above the variable row and all negative clause-vertices below it, as illustrated in Figure~\ref{fig:mpsat_example}.
	\textsc{Monotone Planar 3-SAT} is also known to be NP-complete~\cite{DBLP:conf/cocoon/BergK10}.

	Our new \textsc{3-SAT} version is a further restriction of \textsc{Monotone Planar 3-SAT} in which each 3-clause has (at least) two variables that are consecutive on the variable row:
	
	\paragraph{\textsc{Monotone Planar 3-SAT \newrestriction{} (\newsatshort{})}}
	An instance is a \textsc{Monotone Planar 3-SAT} formula $\phi$, where any clause with 3 variables is of the form $(x_j \lor x_k \lor x_{k+1})$ or $(\lnot x_j \lor \lnot x_k \lor \lnot x_{k + 1})$.

	We call the two consecutive variables of a 3-clause $C$ the \textit{\nvarstext{}} of $C$, which we denote by $\nvars{C}$, i.e., $\nvars{(x_j \lor x_k \lor x_{k+1})} = \{x_k, x_{k+1}\}$.
	Note that we do not require 2-clauses to have \nvarstext{}. See Figure~\ref{fig:mpsat_example} for an example instance.
	
	Before proving the hardness of \newsatshort{}, we specify additional assumptions on instances of \newsatshort{} that are useful for our reduction from this variant in Section~\ref{sec:hardness-assembly}.
	To this end, we give definitions that are based on a given rectilinear embedding of $G_\phi$, 
	which capture the hierarchical relationship between clauses.
	
	\paragraph{Enclosing/Child/Parent/Root Clauses}
	Let us fix a rectilinear embedding of $G_\phi$ and let $C$ and $C'$ be two clauses that are on the same side of the variable row.
	We say that $C$ \emph{encloses} $C'$ if one can draw a vertical line segment $s$ connecting $C$ and $C'$, and also $C$ is vertically further from the variable row than $C'$.
	$C$ is called the \emph{parent} of $C'$ if it encloses $C'$ and furthermore we can draw $s$ without crossing any clauses.
	In this case, we say that $C'$ is the \emph{child} of $C$.
	A clause that does not have a parent is called a \emph{root} clause.
	For example, in Figure~\ref{fig:mpsat_example} $C_1$ encloses $C_3$, but is only the parent of $C_2$.
	
	We now highlight the additional assumptions we make for a given instance of \newsatshort{}:
	
	\begin{enumerate}[label=A\arabic*]
		\item
		On each side of the variable row, there is exactly one root clause $C_r$ such that $C_r$ contains two literals and encloses all other clauses on that side.
		\label{sat-ass:root}
		
		\item
		For each clause with three variable $C$, where $\nvars{C} = \{x_i, x_{i+1}\}$, no child clause of $C$ appears horizontally between the edges $(C, x_{i}), (C, x_{i+1})$ in $G_\phi$.
		\label{sat-ass:nochild}
	\end{enumerate}
	
	The assumptions hold for the instance in Figure~\ref{fig:mpsat_example}; $C_1$ and $C_5$ are each the unique root 2-clause on their respective side.
	The following straightforward lemma shows that we may assume \ref{sat-ass:root} and \ref{sat-ass:nochild} without any loss of generality.

	\begin{lemma} \label{lemma:root_clause}
		Let $\phi \in$ \newsatshort{}.
		Then $\phi$ can be converted in polynomial time to an equivalent Boolean formula $\phi' \in$~\newsatshort{} that has a rectilinear embedding where~\ref{sat-ass:root} and~\ref{sat-ass:nochild} hold.
	\end{lemma}
	
	\begin{proof}
		To have~\ref{sat-ass:root} we modify $\phi$ as follows:
		Introduce a variable at each end of the variable row, and add a 2-clause $C_r$ containing the new variables such that it encloses the rest of the positive clauses.
		Any existing positive root clause must now be the child of $C_r$.
		A unique negative root 2-clause containing the two new variables is obtained in a symmetric manner.
		The two new root clauses can always be satisfied, so $\phi'$ is equivalent to $\phi$.
		
		Now let $C$ be a clause with $\nvars{C} = \{x_i, x_{i+1}\}$ for which~\ref{sat-ass:nochild} does not hold.
		Then the child clause $C'$ violating~\ref{sat-ass:nochild} must contain only the variables $\{x_i, x_{i+1}\}$.
		However, since $\phi$ is monotone, $C'$ makes $C$ redundant and so we can remove $C$ from $\phi$.
	\end{proof}

	\begin{theorem}
		\textsc{Monotone Planar 3-SAT with Neighboring Variable Pairs} is NP-complete.
		\label{thm:sat}
	\end{theorem}
	\begin{proof}
		The problem is clearly in NP. We prove NP-hardness by a reduction from \textsc{Monotone Planar 3-SAT}. Let $\phi$ be a \textsc{Monotone Planar 3-SAT} instance given as a rectilinear embedding. We say that a 3-clause is \textit{valid} if it has a neighboring variable pair.
		The reduction works in two stages.
		In the first stage, we modify each 3-clause in $\phi$ so that it is valid.
		This stage introduces clauses that are not monotone, which are fixed, i.e., transformed to be monotone, in the second stage.
		
		\begin{figure}[htb]
			\centering
			\centering
			\includegraphics[width=0.55\textwidth]{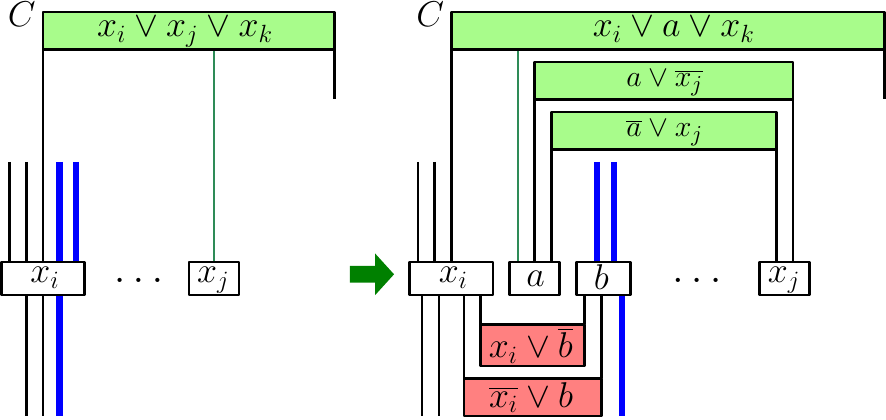} 
			\caption{Modifying a 3-clause so that it has a neighboring variable pair.
				The ellipsis represents the location of the variables $x_{i+1}, \ldots, x_{j-1}$.
			}
			\label{fig:PSAT_R1}
		\end{figure}
		
		\paragraph{Stage 1: making 3-clauses valid}
		Let $C=\clause{x_i}{x_j}{x_k}$, $i < j < k$, be a positive 3-clause, which therefore lies above the variable row.
		We modify it as follows; refer to Figure~\ref{fig:PSAT_R1} throughout. (Negative clauses, which lie below the variable row, are modified symmetrically.) Introduce two variables $a$ and then $b$ immediately to the right of $x_i$ and add the following \textit{equivalence clauses} that enforce $a=x_j$ and $b=x_i$:
		Add $(x_i \lor \overline{b})$ and $(\overline{x_i} \lor b)$ below the variable row so that they do not enclose any existing clauses.
		Add $(a \lor \overline{x_j})$ and $(\overline{a} \lor x_j)$ above the variable row so that they enclose all the existing clauses located horizontally between edges $(C,x_i)$ and $(C,x_j)$.
		
		We then apply the transformation \rewire{C, x_i, b}, which replaces $x_i$ by $b$ in each clause $C'$ that is either (i) above the variable row and has an edge $(C', x_i)$ that is to the right of the edge $(C, x_i)$ or (ii) below the variable row and has \nvars{C'}=$\{x_i, x_{i+1}\}$.
		Finally, we replace $x_j$ by $a$ in clause $C$, so that now $C= \clause{x_i}{a}{x_k}$, which makes $C$ valid since $x_i$ and $a$ are neighbors.
		
		The modification preserves the rectilinear embedding as we can shrink existing variables and clauses to create room for the new variables and clauses. The thick blue edges 
		in Figure~\ref{fig:PSAT_R1} are shifted right from $x_i$ to $b$ due to \rewire{C, x_i, b}. Any clause that was connected to $x_i$, is now connected to $b$ instead.
		The modified formula is equivalent to $\phi$ because the equivalence clauses are satisfied if and only if $a=x_j$ and $b=x_i$ and the variables are replaced accordingly. We verify that the modification does not invalidate a 3-clause already having \nvarstext{}: Observe that only a 3-clause $C'$ with the \nvarstext{} $\{x_i, x_{i+1}\}$ in $\phi$ may potentially be invalidated. However, by applying \rewire{C, x_i, b} we would have $\nvars{C'}=\{b, x_{i+1}\}$. Finally, all the clauses except the four new equivalence clauses remain monotone and on the correct side of the variable row.
		
		\begin{figure}[htb]
			\centering
			\includegraphics[width=0.42\textwidth]{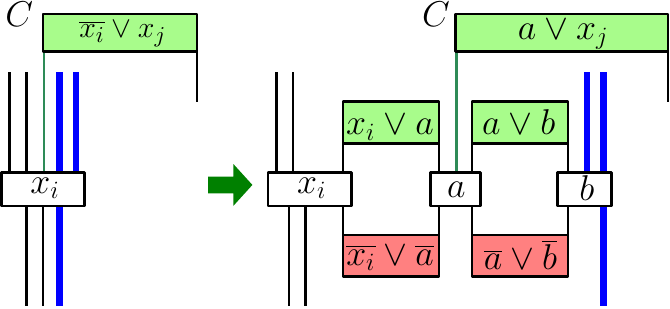} %
			\caption{Fixing a non-monotone 2-clause.}
			\label{fig:sat2}
		\end{figure}
		
		\paragraph{Stage 2: fixing non-monotone clauses}
		By applying the modification above to each 3-clause in $\phi$, we transform it to an equivalent formula in which each 3-clause has a neighboring variable pair. However, we now have to address the new non-monotone 2-clauses. We therefore apply a transformation nearly identical to the one used in the NP-completeness proof of \textsc{Monotone Planar 3-SAT}~\cite{DBLP:conf/cocoon/BergK10}, which fixes one non-monotone 2-clause at a time.
		Assume that the variables have been relabeled according to their order on the variable row, and let $C=\clausetwo{\overline{x_i}}{x_j}$, $i < j$ be a non-monotone clause. We make $C$ positive if it is above the variable row and negative if it is below it, as per the requirements of \textsc{Monotone Planar 3-SAT}. We assume the former case and modify $C$ as follows (the latter case, as well as the case of $j < i$, are handled similarly).
		Introduce two variables $a$ and $b$ immediately to the right of $x_i$ and add the following equivalence clauses, which are now monotone, that enforce $x_i=\overline{a}=b$; see Figure~\ref{fig:sat2}:
		\begin{gather*}
			(x_i \lor a) \land (\overline{x_i} \lor \overline{a}) \land (a \lor b) \land (\overline{a} \lor \overline{b}).
		\end{gather*}
		The new clauses do not enclose existing clauses.
		As in the first stage, we then apply \rewire{C, x_i, b}, which replaces $x_i$ by $b$ as defined above.
		Then, we replace $x_i$ by $a$ in clause $C$, which turns $C$ into a positive clause, as required.
		
		Arguing as in the first stage of the reduction, the modified formula is equivalent to $\phi$ and the modified rectilinear embedding is valid. As before, performing \rewire{C, x_i, b} guarantees that the modification preserves the neighboring variable pair of each 3-clause. Lastly, the new clauses, as well as $C$, are monotone and on the correct side of the variable row.
		
		By applying the modification to each non-monotone 2-clause, we transform $\phi$ to an equivalent formula that meets the requirements: it is monotone, all the positive (resp. negative) clauses are above (resp. below) the variable row, and each 3-clause is valid. Finally, it is easy to verify that the reduction can be done in polynomial time.
	\end{proof}

\section{Hardness of Connected Assembly Partitioning}
	\label{sec:hardness-assembly}
	In this section, we prove the NP-completeness of the \textsc{PPCST} problem
	even when the assembly parts are unit-grid squares, i.e.,
	$A \subset \{(i,j)+ [-0.5,0.5]^2 \mid i,j\in \dZ\}$, we call it a \textit{\gridass{}}; Figure~\ref{fig:intro} shows two such assemblies.
	We refer to the problem variant where the input assembly $A$ is a \gridass{} as \textsc{PPCST-GRID}.
	
	We note that \textsc{PPCST} in NP, since we can verify whether a given subassembly corresponds to a valid connected partition in polynomial time (Section~\ref{sec:alg-prelim} gives the required details).
	
	\subsection{The Polygonal Construction}\label{ssec:polygon-construction}\hfill\\
	Given $\phi \in$ \newsatshort{}, we construct a planar assembly $A \coloneqq A(\phi)$ that has a connected partition in \dirup{}
	if and only if $\phi$ is satisfiable.
	We call such a partition of $A$ a \emph{valid connected partition} and denote it by the subassembly $\up{} \subset A$, which contains the parts that are translated up.
	We denote the complement subassembly $A \setminus \up$ by \down{}.
	We construct the assembly $A$ based on the rectilinear embedding of $G_\phi$.
	The variables are represented by fixed-height rectangles whose bottom edges lie on a common horizontal line.
	The left-to-right order of the variable rectangles corresponds to the order of variables on the variable row in $G_\phi$. %
	With a slight abuse of notation, we denote both the variable and the corresponding variable rectangle by $x_i$.
	The main idea is the following correspondence between a partition ${\up{} \subset A}$ and an assignment to $\phi$: $x_i$ is assigned \true{} if and only if $x_i \in \up{}$ for the corresponding variable rectangle.
	We denote such an assignment by \ass{}.

	\begin{figure}[htb]
		\centering\includegraphics[width=0.39\textwidth]{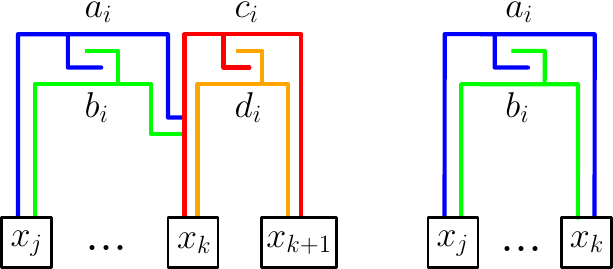}
		\caption{(Left) A clause gadget for $(x_j \vee x_k \vee x_{k+1})$ together with the three variable rectangles it touches. (Right)~A clause gadget for $(x_j \vee x_k)$.
		}
		\label{fig:clause}
	\end{figure}
	
	\paragraph{The clause gadget}
	
	We describe the clause gadget, focusing first on positive 3-clauses.
	Figure~\ref{fig:clause} (left) shows the gadget for a clause $C_i = (x_j \vee x_k \vee x_{k+1})$.
	The gadget is located above the variable row in $A$. %
	It is composed of four parts $\clausea{i}, \clauseb{i}, \clausec{i}, \claused{i}$, which are rectilinear polygons.
	In all the figures, $a_i$ parts are blue, $b_i$ parts are green, $c_i$ parts are red, and $d_i$ parts are orange.
	We use colored line segments to represent very thin rectangular portions of a polygon.
	Each 3-clause $C_i$ has $x_k, x_{k+1}$ as its neighboring variables, as required by \newsatshort{}.
	We always have both $c_i$ and $d_i$ touch $x_k$ and $x_{k+1}$, as shown.
	We remark that when combining all the clause gadgets in the complete construction, we will slightly modify parts $a_i$ and $b_i$. This modification will not change the functionality of the gadget, which we describe next.

	\begin{figure}[!b]
		\centering
		\begin{tabular}{ccc}
        \begin{adjustbox}{valign=c}
		\includegraphics[width=0.43\textwidth]{figures/MPSAT.pdf}
                \end{adjustbox}
			&\hspace*{0.2in}&
		\begin{adjustbox}{valign=c}
        \includegraphics[width=0.45\textwidth]
        {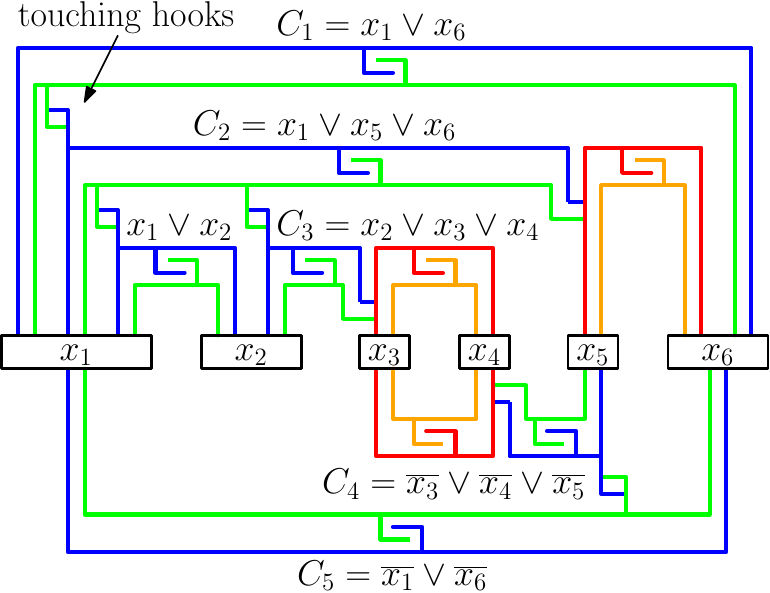} 
        \end{adjustbox}
			\\
			(a)&&(b)
		\end{tabular}
		\caption{(a) An instance of \newsatshort{}, in a rectilinear embedding; 
			(b)  the corresponding assembly construction $A$.
		}
		\label{fig:full}
	\end{figure}

	\begin{figure}[!b] 
		\centering
		\begin{minipage}{0.45\textwidth}
			\centering
			\includegraphics[width=0.96\textwidth]{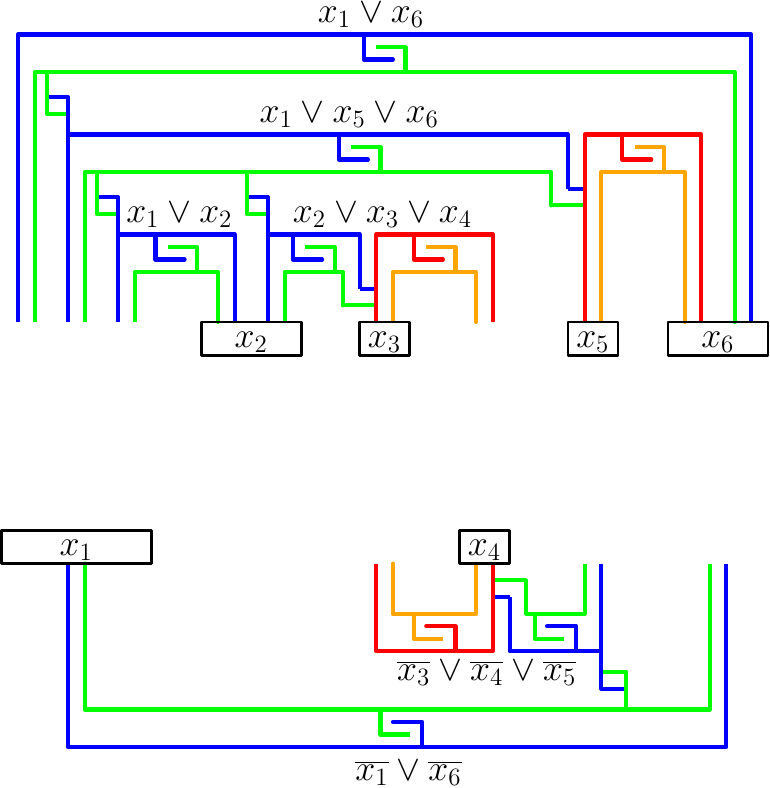} 
			
		\end{minipage}
		\hfill
		\begin{minipage}{0.46\textwidth} %
			\centering
			\includegraphics[width=0.96\textwidth]{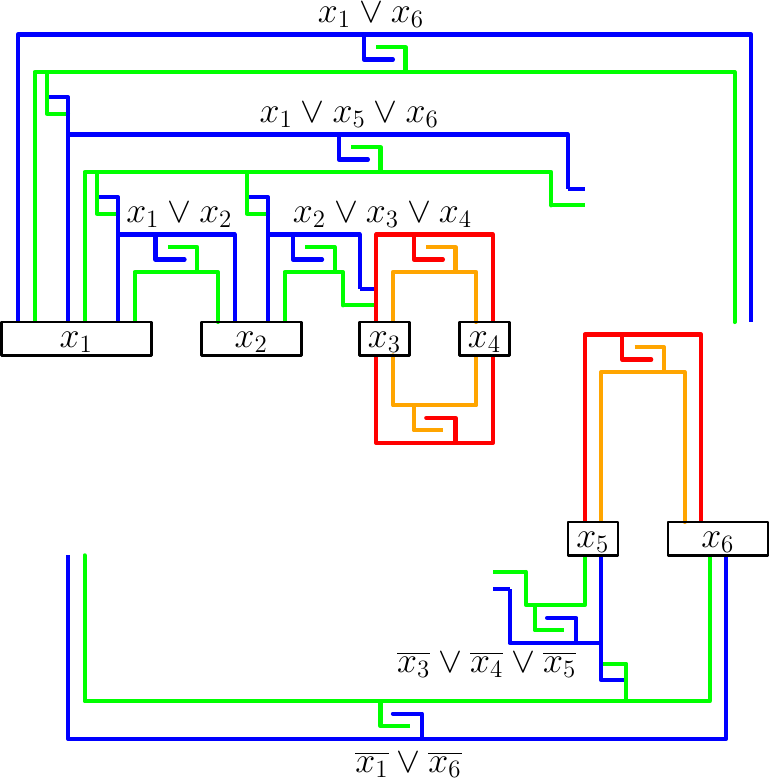} 
		\end{minipage}
		\centering
		\caption{(left) Partition for the assignment $x_1=F,x_2=T,x_3=T,x_4=F,x_5=T,x_6=T$. %
			(right) Partition for the assignment $x_1=T,x_2=T,x_3=T,x_4=T,x_5=F,x_6=F$.
		}
		\label{fig:polygonal_partition_examples}
	\end{figure}
	
	Parts $p, p'\in A$ are said to \textit{interlock} if they are not separable by a vertical translation, i.e., $p \in \up{} \Leftrightarrow p' \in \up{}$.
	For example, parts \clausea{i} and \clauseb{i} of the gadget interlock, as do parts \clausec{i} and \claused{i}.
	Let us assume that $\clausea{i} \in \up{}$ (we later force this to always be the case).
	Due to the interlock, we must have $\clauseb{i} \in \up{}$, and since \clausea{i} and \clauseb{i} are both in \up{}, there needs to be a connection between them in order for \up{} to be connected (i.e., there must exist a path between \clausea{i} and \clauseb{i} in the adjacency graph $\mathcal{G}(\up{})$). The only way to connect them is for $x_j$ or $\clausec{i}$ to be in \up{}. In this sense, \clausea{i} and \clauseb{i} act analogously to a logical OR gate with the inputs $x_j \in \up{}, \clausec{i} \in \up{}$.
	Therefore, if $x_j \in \down{}$, we must have $\clausec{i} \in \up{}$.
	In that case, \clausec{i} and \claused{i} act in an analogous manner to \clausea{i} and \clauseb{i}.
	Namely, we must have $\claused{i} \in \up{}$, and in order to have a connection between \clausec{i} and \claused{i}, we must have either $x_k \in \up{}$ or $x_{k+1} \in \up{}$.
	To summarize, if $a_i \in \up{}$ then the gadget guarantees that at least one of the variable rectangles of $C_i$ (i.e., $x_j, x_k, x_{k+1}$) is also in \up{}.
	We therefore get that any partition with $a_i \in \up{}$ corresponds to an assignment satisfying $C_i$:
	
	\begin{lemma}
		For a positive clause $C_i$, if $\clausea{i} \in \up{}$ then \ass{} satisfies $C_i$.
		\label{lemma:clause_gadget}
	\end{lemma}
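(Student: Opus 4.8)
The plan is to trace through the gadget's structure exactly as described in the paragraph preceding the lemma, but organized as a clean logical deduction. Assume $\clausea{i} \in \up{}$ for the positive $3$-clause $C_i = (x_j \vee x_k \vee x_{k+1})$; I must show that at least one of the variable-rectangles $x_j, x_k, x_{k+1}$ lies in $\up{}$, since by the definition of \ass{} this is exactly the statement that \ass{} satisfies $C_i$.

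First I would record the two structural facts the construction guarantees: (a) \clausea{i} interlocks with \clauseb{i}, and \clausec{i} interlocks with \claused{i}, so that $\clausea{i} \in \up{} \Leftrightarrow \clauseb{i} \in \up{}$ and similarly for the $c,d$ pair; and (b) in the adjacency graph $\mathcal{G}(A)$, the only neighbors of \clausea{i} are \clauseb{i} together with parts outside the gadget that do not help connect it (this needs to be read off Figure~\ref{fig:clause}), so the only parts available to connect \clausea{i} to \clauseb{i} within $\up{}$ are $x_j$ and \clausec{i}; and likewise the only parts available to connect \clausec{i} to \claused{i} within $\up{}$ are $x_k$ and $x_{k+1}$ (both of which \clausec{i} and \claused{i} touch, since they touch $\nvars{C_i}$). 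I would phrase (b) carefully: a connected subassembly containing both \clausea{i} and \clauseb{i} must contain a path between them in $\mathcal{G}(\up{})$, and by the geometry the first edge out of \clausea{i} on such a path must go to $x_j$ or to \clausec{i}.

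Then the deduction runs: from $\clausea{i} \in \up{}$ and the interlock, $\clauseb{i} \in \up{}$. If $x_j \in \up{}$ we are done. Otherwise, since $\up{}$ is connected and the only remaining connector between \clausea{i} and \clauseb{i} is \clausec{i}, we get $\clausec{i} \in \up{}$; by the interlock $\claused{i} \in \up{}$; and now connectedness of $\up{}$ forces a path between \clausec{i} and \claused{i}, whose only possible intermediate vertices are $x_k$ or $x_{k+1}$, so one of these lies in $\up{}$. In all cases some variable-rectangle of $C_i$ is in $\up{}$, hence \ass{} satisfies $C_i$.

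The main obstacle — and the place where I would be most careful — is justifying the "only way to connect" claims, i.e., that no path in $\mathcal{G}(\up{})$ can route around through other clause gadgets or through the variable row to join \clausea{i} to \clauseb{i} (or \clausec{i} to \claused{i}) while bypassing $\{x_j,\clausec{i}\}$ (respectively $\{x_k,x_{k+1}\}$). This is really a planarity/layout argument: \clauseb{i} is drawn so that it is "boxed in" by \clausea{i}, \clausec{i}, and the edges $(C_i,x_j)$, $(C_i,x_k)$, $(C_i,x_{k+1})$ in the rectilinear embedding, and the \newsatshort{} restriction guarantees no other clause lies horizontally between the edges to the neighboring variables, so nothing else is edge-connected to \clauseb{i}. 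I would state this as the geometric invariant the construction is built to satisfy (to be verified once, globally, from the figure and the embedding) and then invoke it here; the rest of the proof is the short chain of implications above.
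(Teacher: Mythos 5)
Your proposal is correct and follows the same reasoning as the paper, which establishes the lemma in the informal discussion preceding its statement: the interlock forces $b_i\in\up$, connectedness of $\up$ forces a path from $a_i$ to $b_i$ that (by the gadget geometry) must pass through $x_j$ or $c_i$, and in the latter case the $c_i$--$d_i$ interlock together with connectedness forces $x_k$ or $x_{k+1}$ into $\up$. You correctly flag the only nontrivial ingredient — the ``only way to connect'' separation claim, which the paper also treats as a geometric fact to be read off the construction — and your framing of it as a boxed-in/planarity invariant is the right way to make it precise.
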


	We now discuss the other types of clauses.
	If $C_i$ is a positive 2-clause $(x_j \vee x_k)$, the corresponding gadget contains only the parts \clausea{i} and \clauseb{i} modified so that their right sides touch $x_k$ instead of \clausec{i}, as shown in Figure~\ref{fig:clause} (right).
	Lemma~\ref{lemma:clause_gadget} holds for 2-clauses as well using a similar argument.
	As for negative clause gadgets, they are drawn the same way, except that they are mirrored through the variable row (as they are placed below it), as will shortly be shown in the example of the whole construction in Figure~\ref{fig:full} (right).
	In general, all claims for positive clauses hold symmetrically for negative clauses, which we do not always reiterate.
	
	\paragraph{The complete construction} %
	We now incorporate all the clause gadgets into the complete construction of $A$.
	All the clause gadgets are placed according to each clause's position in the rectilinear embedding of $G_\phi$, which allows each clause gadget to touch the appropriate variable rectangles;
	see  Figure~\ref{fig:full}.
	
	We now introduce a slight change to the clause gadgets, based on the hierarchical relationship between clauses, as defined in Section~\ref{sec:sat} after introducing \newsatshort{}.
	The change is the addition of \emph{touching hooks} to each parent-child pair of clauses, which are similar to the hooks used within a single clause gadget (to interlock $a_i$ and $b_i$), except that they are now touching.
	More precisely, for each parent-child pair of positive clauses $(C_i,C_j)$, we add a hook emanating downwards from $a_i$ and another one emanating upward from $b_j$ such that the parts $a_i$ and $b_j$ interlock and touch (negative clauses are handled symmetrically); see Figure~\ref{fig:full}.
	
	We now use the fact that on each side of the variable row there is a unique root clause, which contains two literals, and encloses all other clauses on that side (see assumption~\ref{sat-ass:root} and Lemma~\ref{lemma:root_clause}).
	Let $C_r$ denote the unique positive root clause.
	The following lemma states that in order to translate \clausea{r} (the top part of the positive root clause gadget) upward, we must also translate all the $\set{a_i}$ parts of positive clause gadgets.
	
	\begin{lemma} \label{lemma:interlocking_a_parts}
		For a positive non-root clause $C_i$, $a_i$ interlocks with \clausea{r}.
	\end{lemma}

	\begin{proof}
		Let $C_i$ be a positive non-root clause and let $C_j$ be its parent clause.
		Then $a_i$ interlocks via the touching hooks with part $b_j$ of its parent clause.
		Since $b_j$ interlocks with $a_j$, we get that $a_i$ interlocks with $a_j$.
		It is straightforward to inductively conclude that $a_i$ transitively interlocks with \clausea{r}.
	\end{proof}
	
	Hence, all the positive $\{\clausea{i}\}$ (and in fact $\{\clauseb{i}\}$) parts must belong to the same set of the partition.
	Note that this is not the case for $\{\clausec{i}\}$ and $\{\claused{i}\}$ parts, which do not interlock with \clausea{r}, resulting in some flexibility that we will exploit.

	We now prove the correctness of the construction by showing that $\phi$ has a satisfying assignment if and only if the assembly $A$ has a valid connected partition.
	\begin{lemma}
		Let $\up{}$ be a valid connected partition of $A$.
		Then \ass{} satisfies $\phi$.
	\end{lemma}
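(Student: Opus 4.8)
The plan is to show that given a connected partition $\up{}$, the induced assignment $\mathcal{A}(\up{})$ satisfies every clause of $\phi$. First I would observe that $\clausea{\topc{}}$ (the top part of the positive root clause) blocks everything from above in $A$, so it cannot stay in $\down{}$ while anything is moved up; since $\up{}$ is a proper nonempty subassembly that moves in $\dirup{}$, one shows $\clausea{\topc{}} \in \up{}$. (Symmetrically, the top part of the negative root clause must be in $\down{}$, so $\up{} \neq A$ is also forced, but the key point is $\clausea{\topc{}} \in \up{}$.) By Lemma~\ref{lemma:interlocking_a_parts}, every positive $a_i$ interlocks with $\clausea{\topc{}}$, hence $a_i \in \up{}$ for all positive clauses $C_i$. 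Symmetrically, for every negative clause $C_i$, the part $a_i$ (mirrored below the row) is in $\down{}$.

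Next I would apply Lemma~\ref{lemma:clause_gadget}: for each positive clause $C_i$ we have just shown $\clausea{i} \in \up{}$, so $\mathcal{A}(\up{})$ satisfies $C_i$. For negative clauses, the symmetric version of Lemma~\ref{lemma:clause_gadget} (stated to hold by the "all claims for positive clauses hold symmetrically" remark) gives that $\clausea{i} \in \down{}$ forces $\mathcal{A}(\up{})$ to satisfy the negative clause $C_i$ — concretely, at least one of its variable-rectangles is in $\down{}$, i.e., assigned \false{}. Taking the conjunction over all clauses, $\mathcal{A}(\up{})$ satisfies $\phi$.

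The step that needs the most care is the very first one: arguing rigorously that $\clausea{\topc{}} \in \up{}$. This requires knowing that $\up{}$ is nonempty and that $\clausea{\topc{}}$ is the unique (or at least a) part not blocked from above by any other part, so that if $\clausea{\topc{}} \in \down{}$ then nothing above it — in fact nothing at all, by transitivity of the blocking relation up to the root — could be in $\up{}$. I would invoke the directional blocking structure: a part $p \in \up{}$ must be free to move in $\dirup{}$ relative to $A \setminus \up{}$, so any part directly above $p$ must also be in $\up{}$; following this chain upward from any part of $\up{}$ reaches $\clausea{\topc{}}$, using that the clause gadgets are nested with the root at the top and that $\clausea{\topc{}}$ sits above everything on the positive side (and nothing on the negative side blocks it from above). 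A subtlety to handle is that $\up{}$ could a priori consist entirely of negative-side parts; but then the same upward-chasing argument on the negative side would force the negative root's top part into $\up{}$, and that part is blocked from above by positive-side parts, eventually by $\clausea{\topc{}}$, a contradiction unless $\clausea{\topc{}} \in \up{}$ as well. Once $\clausea{\topc{}} \in \up{}$ is established, the rest is a direct chain of the two cited lemmas plus their negative-side symmetry.
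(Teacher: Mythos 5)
Your proof is correct and follows essentially the same route as the paper: establish $\clausea{\topc{}} \in \up{}$ via the blocking structure, propagate to all positive $a_i$ by Lemma~\ref{lemma:interlocking_a_parts}, conclude via Lemma~\ref{lemma:clause_gadget}, and invoke symmetry for negative clauses. You flesh out the first step (that $\clausea{\topc{}}$ must move up) in considerably more detail than the paper, which simply notes that $\clausea{\topc{}}$ blocks everything from above, but the underlying argument is the same.
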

	\begin{proof}
		Assume that \up{} is a connected partition of $A$, namely that all parts in \up{} can be translated upward without colliding with \down{} and that each of \up{} and \down{} is connected.
		We need to show that the corresponding assignment \ass{}, in which $x_i$ is \true{} if and only if $x_i \in \up{}$, satisfies $\phi$.
		We show that each positive clause in $\phi$ is satisfied, which will symmetrically give the same results for each negative clause.
		First, any part of $A$ that translates up collides with $a_r$.
		This is true because the root clause $C_r$ encloses all other positive clauses and since we may assume that each variable appears in a positive clause.
		Hence, we have $\clausea{r} \in \up$.
		By Lemma~\ref{lemma:interlocking_a_parts}, we also have $\clausea{i} \in \up{}$ for each positive clause $C_i$.
		By Lemma~\ref{lemma:clause_gadget}, each $C_i$ must then be satisfied, which concludes the proof.
	\end{proof}
	
	For the other direction, we assume that $\phi$ has a satisfying assignment $\mathcal{A}$ and specify a corresponding partition \up{}.
	Of the variable rectangles, \up{} includes exactly those that correspond to variables assigned \true{}.
	For each positive clause $C_i$, we include \clausea{i} and \clauseb{i} in \up{}.
	If $C_i$ is a 3-clause, we determine where \clausec{i} and \claused{i} go based on the assignment $\mathcal{A}$ as follows:
	Assign \clausec{i} and \claused{i} to \up{} if and only if $\mathcal{A}$ assigns \emph{at least one} of $\nvars{C_i}$ to be \true{}.
	If $C_i$ is a negative 3-clause, assign \clausec{i} and \claused{i} to \up{} if and only if $\mathcal{A}$ assigns \emph{both} $\nvars{C_i}$ to be \true{}.
	
	See Figure~\ref{fig:polygonal_partition_examples} for examples of partitions of the construction in Figure~\ref{fig:full}.
	We now prove the connectivity and separability properties of the partition \up{}.

	\begin{lemma}
		\up{} and \down{}, as defined based on a satisfying assignment $\mathcal{A}$, are separable by an infinite vertical translation.
		\label{lemma:no_collision}
	\end{lemma}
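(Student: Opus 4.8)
The plan is to reduce the lemma to a purely combinatorial statement via the \emph{directional blocking graph}~\cite{wilson1994geometric}: a subassembly $S\subset A$ can be rigidly translated arbitrarily far in a direction $\vec d$ without colliding with $A\setminus S$ if and only if no part of $A\setminus S$ \emph{blocks} a part of $S$ in direction $\vec d$, where $q$ blocks $p$ when $p+t\vec d$ overlaps the interior of $q$ for some $t>0$; for $\vec d=\dirup{}$ and the axis-parallel parts of our construction this holds exactly when some point of $q$ lies strictly directly above a point of $p$. So it suffices to show that \emph{no part of $\down{}$ lies, even partially, directly above a part of $\up{}$}; equivalently, whenever $q$ lies directly above $p$ one has $p\in\up{}\implies q\in\up{}$.

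I would then read off from the construction (Figures~\ref{fig:clause} and~\ref{fig:full}) which parts can lie directly above which. Two variable-rectangles never lie one directly above the other (equal heights, common baseline), and a variable-rectangle lies directly above only parts of clause gadgets on the opposite side of the variable row. The part $\clausea{\topc}$ of the positive root clause lies above every other part of $A$ (and the $a$-part of the negative root clause lies below every other part, hence above none). Within a positive clause gadget $C_i$, the parts $\clausea{i}$ and $\clauseb{i}$ lie above their mutual interlock mate, the variable-rectangles $C_i$ touches, and all parts (in particular $a$- and $b$-parts) of the clauses nested inside $C_i$; whereas $\clausec{i}$ and $\claused{i}$ lie above \emph{only} their interlock mate $\claused{i}$ / $\clausec{i}$, the two variable-rectangles of $\nvars{C_i}$, and whatever clause parts sit directly beneath the columns of those two variable-rectangles. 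This last ``only'' is the crux, and I expect it to be the main (though routine) obstacle: it holds because $\clausec{i}$ and $\claused{i}$ are routed over the columns of the neighboring variables $x_k,x_{k+1}$, and because the \newsatshort{} restriction forbids any clause from being nested horizontally between the edges $(C_i,x_k)$ and $(C_i,x_{k+1})$, so that every clause nested inside $C_i$ sits under $\clauseb{i}$ rather than under $\clausec{i},\claused{i}$.

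Finally I would verify the condition for the partition $\up{}$ defined from the satisfying assignment $\mathcal A$, checking $q\in\down{}\implies p\in\down{}$ for each part $q$ lying directly above some part $p$. Since the construction and the partition are symmetric under reflection across the variable row combined with the swap $\up{}\leftrightarrow\down{}$, it suffices to treat $q$ above the variable row or $q$ a variable-rectangle. If $q=\clausea{i}$ or $\clauseb{i}$ of a positive clause (in particular $q=\clausea{\topc}$), then $q\in\up{}$ by the construction --- every $a$- and $b$-part of a positive clause lies in $\up{}$, consistently with their transitive interlocking with $\clausea{\topc}$ (Lemma~\ref{lemma:interlocking_a_parts}) --- so there is nothing to check. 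If $q=\clausec{i}$ or $\claused{i}$ of a positive 3-clause and $q\in\down{}$, then by the rule defining $\up{}$ both variables of $\nvars{C_i}$ are \false{}; hence every part directly beneath $q$ --- namely its interlock mate (also in $\down{}$, by the same rule), the two variable-rectangles of $\nvars{C_i}$ (being \false{}, they are in $\down{}$), and any negative-clause part beneath those two columns (which is either an $a$- or $b$-part of a negative clause, hence in $\down{}$, or a $c$- or $d$-part of a negative clause having one of those \false{} variables among its neighboring variables, hence again in $\down{}$ by the corresponding rule) --- lies in $\down{}$. If $q$ is a \false{} variable-rectangle, the parts directly beneath it are negative-clause parts, each in $\down{}$ by the same argument. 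In all cases $q\in\down{}\implies p\in\down{}$, so by the blocking-graph characterization $\up{}$ can be translated arbitrarily far in \dirup{} away from $\down{}$, as claimed.
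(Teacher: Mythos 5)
Your proof is correct and takes essentially the same route as the paper's: both observations hinge on (a) all $a$- and $b$-parts staying with their side, so only $c$- and $d$-parts can cross, (b) the \newsatshort{} property forcing $c_i,d_i$ to sit directly over only $\nvars{C_i}$ and what lies beneath those two columns, and (c) a crossed positive $c_i,d_i$ and a crossed negative $c_j,d_j$ being incompatible because they would share a touched variable-rectangle whose truth value is forced both ways. Your write-up is more exhaustive (a full blocking-graph case analysis over every ``$q$ directly above $p$'' pair) where the paper narrows immediately to $S=\cparts+\cap\down$ and $T=\cparts-\cap\up$, but the mathematical content is the same.
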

	\begin{proof}
		If all the parts above the variable row translate up and all the parts below the variable row do not, then it is easy to verify that no collisions can occur.
		Therefore, we assume that this is not the case.
		More precisely, assume without any loss of generality that some part of $A$ above the variable row translates down.
		This part must be some $d_i$ for a positive clause $C_i$ (since $c_i$ and $d_i$ always translate in the same direction we may treat them as one part; hence we refer only to $d_i$).
		Using the neighboring variable pairs property of \newsatshort{}, $d_i$ touches neighboring variable rectangles, and so there are no other parts below $d_i$ that are above the variable row (see assumption~\ref{sat-ass:nochild} in Section~\ref{sec:sat}).
		Therefore, $d_i$ may only collide with a part that is initially below the variable row.
		Let us assume, for a contradiction, that $d_i$ collides with some $d_j \in \up$, where $C_j$ is a negative clause (the same arguments hold for a collision with $c_j$).
		Since $d_i$ and $d_j$ both touch some neighboring variable rectangles in $A$, there must be a variable rectangle $x_k$ that they both initially touch, in order to collide.
		By the definition of the partition, $\nvars{C_i} \subset \down{}$, so we have $x_k \in \down$.
		However, if $x_k \in \down$ then $d_j$ would be assigned to $\down$ instead of $\up$ as we assumed, which is a contradiction.
	\end{proof}
	 
	\begin{lemma}
		\up{} and \down{}, as defined based on a satisfying assignment $\mathcal{A}$, are both connected subassemblies.
		\label{lemma:conn_partition}
	\end{lemma}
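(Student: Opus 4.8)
The plan is to show separately that $\down{}$ is connected and that $\up{}$ is connected, in each case by tracing how every part in the subassembly links (through the adjacency graph $\mathcal{G}(A)$) to a fixed "anchor" part. For $\down{}$, the natural anchor is $\clausea{\topc{}}$ of the negative root clause, which (symmetrically to Lemma~\ref{lemma:interlocking_a_parts}) sits at the bottom of everything; more to the point, $\down{}$ contains every variable-rectangle assigned \false{} together with the negative root gadget, so the whole "spine" of negative $a$- and $b$-parts together with the variable row is available. For $\up{}$, the anchor is $\clausea{\topc{}}$ of the positive root clause. In both cases I will argue the argument for $\up{}$ in detail and note that $\down{}$ is symmetric.

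First I would establish connectivity of the positive "skeleton" inside $\up{}$: by Lemma~\ref{lemma:interlocking_a_parts} every $\clausea{i}\in\cparts+$ interlocks with $\clausea{\topc{}}$, hence lies in $\up{}$, and each such $\clausea{i}$ is joined to the $\clausea{j}$ of its parent via the touching hooks (and to $\clauseb{j}$, which touches $\clausea{i}$); likewise $\clauseb{i}$ touches $\clausea{i}$. So the set $\{\clausea{i},\clauseb{i}\}_{C_i\in\cparts+}$ is connected in $\mathcal{G}(\up{})$, rooted at $\clausea{\topc{}}$. Next I would attach the remaining members of $\up{}$. A variable-rectangle $x_i\in\up{}$ is attached as follows: either $x_i$ is itself the connector realizing the OR-gate inside some positive clause $C$ with $x_i$ as a leaf variable (so $x_i$ touches $\clauseb{C}$ or $\clausea{C}$ or a $\clausec{}/\claused{}$ part of $C$), or, tracing outward, $x_i$ touches some positive clause gadget that encloses it; since $\phi$ is satisfying and $x_i$ is \true{}, the way the gadgets were placed on the variable row guarantees $x_i$ is edge-connected to a part of the positive skeleton. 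The $\clausec{i},\claused{i}$ parts that were placed in $\up{}$ were put there precisely because one of $\nvars{C_i}$ is \true{}, i.e.\ one of those variable-rectangles is in $\up{}$; since $\clausec{i},\claused{i}$ both touch $\nvars{C_i}$ and touch each other and touch $\clauseb{i}$ (for 3-clauses, $\clausec{i}$ touches $\clauseb{i}$), they hook onto the skeleton through that \true{} neighboring variable-rectangle. The only subtlety is the purely-negative parts that landed in $\up{}$: but $\cparts-\cap\up{}$ consists only of $\clausec{j},\claused{j}$ with $\nvars{C_j}$ both \true{}, and these are joined to each other, to their own $\clauseb{j}$ only if $\clauseb{j}\in\up{}$ — which it is not — so instead they must be reconnected through the \true{} variable-rectangles in $\nvars{C_j}$, which lie in $\up{}$ and are part of the (already-connected) positive side. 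I would verify this last hook is exactly the reason the assignment rule for $c/d$ parts was chosen.

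The main obstacle, and the step I expect to need the most care, is precisely this last point: showing that the "stray" $\clausec{j},\claused{j}$ parts of a negative clause that migrate into $\up{}$ do not form an isolated component. The argument is that $\clausec{j},\claused{j}$ are edge-connected to the two variable-rectangles of $\nvars{C_j}$, both of which are in $\up{}$ by the assignment rule, and each of those variable-rectangles is in turn edge-connected (directly along the variable row or through an enclosing positive clause gadget) to the positive skeleton that is anchored at $\clausea{\topc{}}$. Dually, one must check that removing these $c/d$ parts from $\down{}$ does not disconnect the negative side — here the key observation is that the negative $a$- and $b$-parts remain (they never leave $\down{}$ by Lemma~\ref{lemma:interlocking_a_parts} applied to the negative root), and the $\clausec{j},\claused{j}$ that stayed in $\down{}$ attach to their own $\clauseb{j}$, while the negative variable-rectangles in $\down{}$ attach along the variable row, so the negative side stays connected through its own skeleton. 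Once these hookups are spelled out, connectivity of both $\up{}$ and $\down{}$ follows by tracing every part back to its respective root, and combined with Lemma~\ref{lemma:no_collision} this completes the "if" direction of the reduction.
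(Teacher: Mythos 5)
Your proof follows essentially the same plan as the paper's: build the positive $a_i$/$b_i$ skeleton rooted at $\clausea{\topc{}}$, then absorb the remaining parts in $\up{}$, with $\down{}$ by symmetry. The gap is in the step that attaches variable-rectangles to the skeleton. You assert that ``the way the gadgets were placed on the variable row guarantees $x_i$ is edge-connected to a part of the positive skeleton,'' but two facts are tacitly assumed there and need justification. First, you need that every variable appears in at least one positive clause and at least one negative clause (otherwise $x_i\in\up{}$ might touch no part of $\cparts+$ at all, since variable-rectangles do not touch each other); the paper states this as an explicit normalization of $\phi$. Second, granted that $x_i$ touches some $p\in\cparts+$, you must still argue that $p\in\up{}$: this is immediate when $p$ is an $a_i$ or $b_i$ part, but not when $p$ is a $c_i$/$d_i$ part. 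The paper closes this cleanly by invoking Lemma~\ref{lemma:no_collision} (if $p\in\down{}$ while $x_i\in\up{}$, the vertical translation would produce a collision); a direct argument from the assignment rule for $c_i,d_i$ is also possible but must be made. Related to this, your ordering---attaching variable-rectangles before showing the positive $c_i,d_i$ parts belong to the skeleton---runs into trouble when $x_i$ is only ever a neighboring variable of its positive clauses (so it touches only $c$/$d$ parts); the paper avoids this by first extending $S$ to all of $\cparts+\cap\up{}$ and only then treating the variable-rectangles.

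One smaller misstatement: $\clausec{i}$ and $\claused{i}$ do not touch each other in the construction. They interlock, and their connection inside $\up{}$ must pass through a true variable-rectangle in $\nvars{C_i}$, exactly as the $a_i$/$b_i$ pair requires a connector ($x_j$ or $c_i$). You do give the correct connecting mechanism in the very next clause, so this is a slip rather than a structural flaw, but the ``touch each other'' claim as written is false and should not appear as a justification.
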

	
	\begin{proof}
		We verify that \up{} is connected by considering each type of part in it (the symmetric logic applies to \down{}).
		For each positive clause $C_i= (x_j \vee x_k \vee x_{k+1})$, let $S_i \subset \up{}$ denote the parts of the respective clause gadget and the variable rectangles $x_j,x_k,x_{k+1}$ that translate up.
		We establish that $S_i$ is connected (the case of a 2-clause, with $S_i$ defined analogously, is simpler):
		We have $a_i, b_i \in S_i$.
		Given that $\mathcal{A}$ satisfies $C_i$, it is easy to verify that $a_i$ and $b_i$ are connected either via $c_i$ or via $x_j$, which $a_i$ and $b_i$ touch.
		If $c_i$ and $d_i$ are in $S_i$, then they are connected via either $x_k$ or $x_{k+1}$, one of which is in $S_i$ by the definition of the partition.
		
		For a child-parent pair of clauses, $C_i, C_j$ the touching hooks (see Figure~\ref{fig:full}) further imply that $S_i$ is connected to $S_j$.
		As a result, $S_i$ must be connected to $S_r$, where $C_r$ is the unique positive root clause.
		Note that any variable rectangle in $\up{}$ must touch some clause gadget part in $\up{}$, since we may assume that the corresponding variable appears in a positive clause.
		Therefore, we obtain that the subassembly $S \subseteq \up$ containing positive clause gadget parts and variable rectangles is connected.
		
		It remains to examine some $c_i \in \up{}$ where $C_i$ is a negative clause (the case of $d_i$ is the same).
		Since $c_i \in \up{}$, then by definition we have $\nvars{C_i} \subset \up{}$.
		In other words, $c_i$ is connected to $S$ via the neighboring variables it touches.
		Therefore, all the parts in \up{} form a connected subassembly.
	\end{proof}
	
	In summary, given a satisfying assignment to $\phi$, we have defined a valid connected partition $\up{} \subset A$, thus concluding the argument in the remaining direction.
	Overall, we obtain the following result.
	
	\begin{theorem}
		\textsc{PPCST} is NP-complete.
		\label{thm:ppcst_hardness}
	\end{theorem}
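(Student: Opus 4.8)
The plan is to assemble the pieces that have already been established in Section~\ref{sec:hardness-assembly} into a single reduction from \newsatshort{}, which is NP-complete by Theorem~\ref{thm:sat}. Membership in NP is immediate: given a candidate subassembly $S \subset A$, one checks in polynomial time that $S$ and $A \setminus S$ are each connected (by examining the adjacency graph $\mathcal{G}(A)$) and that $S$ can be translated arbitrarily far in direction $\vec d$ without collision (e.g., via the directional blocking graph of~\cite{wilson1994geometric}, or directly by checking that no part of $A \setminus S$ lies in the swept region of any part of $S$). So the bulk of the argument is the hardness direction.

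For hardness, I would fix an instance $\phi \in$ \newsatshort{}, together with a rectilinear embedding of $G_\phi$, and build the assembly $A$ exactly as described in Section~\ref{ssec:polygon-construction}: variable-rectangles on a common horizontal line in left-to-right order, a four-part (or two-part, for 2-clauses) gadget $\clausea{i},\clauseb{i},\clausec{i},\claused{i}$ per clause placed above or below the variable row according to its sign and nesting depth in the embedding, the interlocking hooks inside each gadget, and the touching hooks between each parent-child pair of clauses. Take $\vec d = $ \dirup{}. This construction has polynomial size: the number of parts is $O(n + m)$ where $n,m$ are the numbers of variables and clauses, the coordinates are polynomially bounded (the embedding has polynomial size, and each gadget needs only a bounded number of thin rectangular features), so $A$ is computable in polynomial time. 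The correspondence is the one already fixed: $x_i \in \up{}$ iff $x_i$ is \true{} in \ass{}.

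Then I would simply invoke the lemmas already proved. For the forward direction, the lemma immediately preceding the theorem shows that any connected partition $\up{}$ of $A$ in \dirup{} yields an assignment \ass{} satisfying $\phi$ (using that $\clausea{\topc}$ must be in \up{}, that all $\clausea{i}$ interlock with it by Lemma~\ref{lemma:interlocking_a_parts}, and Lemma~\ref{lemma:clause_gadget} per clause). For the reverse direction, given a satisfying assignment $\mathcal{A}$ of $\phi$, I define \up{} as in the text (all $\clausea{i},\clauseb{i}$ of positive clauses, the true variable-rectangles, and $\clausec{i},\claused{i}$ precisely when some neighboring variable of $C_i$ is true, symmetrically for negative clauses); Lemma~\ref{lemma:no_collision} gives separability by a vertical translation and Lemma~\ref{lemma:conn_partition} gives connectedness of both \up{} and \down{}, so \up{} is a connected partition. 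Combining the two directions with the polynomial-time computability of $A$ and the NP-hardness of \newsatshort{} completes the reduction.

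Since every substantive claim is already packaged as a lemma, the only genuine work in writing this theorem is being careful about the ``without loss of generality'' normalizations: that there is a single root 2-clause on each side (Lemma~\ref{lemma:root_clause}) and that each variable occurs in at least one positive and one negative clause — both used silently in the lemmas above. The main obstacle, such as it is, is bookkeeping: making sure the reverse-direction partition is well-defined for all clause types (3-clauses, 2-clauses, positive and negative) and that the parts not explicitly assigned — e.g. $\clausec{i},\claused{i}$ of 2-clauses, which don't exist, versus those of 3-clauses — are handled consistently, so that the hypotheses of Lemmas~\ref{lemma:no_collision} and~\ref{lemma:conn_partition} are literally met. There is no new geometric or combinatorial insight required at the level of the theorem; it is the synthesis step.
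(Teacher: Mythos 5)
Your proposal follows the paper's own argument essentially verbatim: reduction from \newsatshort{} via the polygonal construction of Section~\ref{ssec:polygon-construction}, with the forward direction from the lemma preceding the theorem (built on Lemmas~\ref{lemma:interlocking_a_parts} and~\ref{lemma:clause_gadget}) and the reverse direction from Lemmas~\ref{lemma:no_collision} and~\ref{lemma:conn_partition}, plus the routine NP membership check. This is the same synthesis the paper performs, so there is nothing to add.
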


	\subsection{The Unit-Grid Square Construction}
	\label{subsec:square-construction}\hfill\\
	
	We now convert $A$ to a \gridass{} $A_{sq}$ to show that the restricted problem variant \textsc{PPCST-GRID} remains NP-complete.
	We show how to convert each polygon $p \in A$ to a \textit{quasi-polygon} --- an equivalent collection of squares that emulates $p$.
	
	Figure~\ref{fig:full_squares} shows the converted version of the polygonal construction in Figure~\ref{fig:full}.
	In this figure, each line segment signifies a contiguous part of a single grid row or column occupied by squares.
	Each variable now corresponds to a variable gadget, which we temporarily consider to be a rigid part (like the variable rectangles in the polygonal construction) for ease of exposition.
	The full details of the variable gadgets are given at the end of this section.
	The definition of the assignment \ass{} now refers to the variable gadgets, i.e., $x_i \in \phi$ is assigned \true{} if and only if the corresponding variable gadget is in $\up{}$.
	
	Figure~\ref{fig:clause_conversion} shows a "zoomed in" view of a converted positive 3-clause, where individual squares are shown. It shows the four quasi-polygons $a_i, b_i, c_i, d_i$ that correspond to the polygonal parts of the same name, respectively.
	Each quasi-polygon maintains the same color as before (blue, green, red, and orange for $a_i,\ldots, d_i$, respectively).
	
	\begin{figure}[htb]
		\centering
		\includegraphics[width=0.87\textwidth]{figures/full-squares_twisted_path_w_zoom.pdf}
		\caption{A complete grid square assembly construction (right). The shaded rectangle is shown zoomed in on the left, where $C(b_i)$ is drawn using thicker pink line segments.}
		\label{fig:full_squares}
	\end{figure}
	
	We now describe the conversion of the polygonal construction in more detail.
	For brevity, the description focuses only on the new elements and intricate details.
	At the top of the assembly $A_{sq}$, we add a \textit{blocking} quasi-polygon $B$ (drawn in black in Figure~\ref{fig:full_squares}).
	The top row of $B$ is wide enough so that it blocks all the other squares in the construction from above.
	A symmetric quasi-polygon $B'$ that blocks all other squares from below is added at the bottom of the construction.
	That is, \up{} (resp. \down{}) must contain a square of $B$ (resp. $B'$).
	Therefore, we denote by $t$ some square in $B' \cap \down{}$, which is non-empty.
	
	The main intricacy that arises in quasi-polygons is that they may get split up by the partition instead of remaining a single rigid piece.
	We therefore introduce a lemma that will be used to show that such splits cannot occur in critical places.
	
	\begin{lemma} \label{lemma:split}
		Let $s \in A_{sq}$ be a square and $S \subset A_{sq}$ be a subassembly such that $S$ is a connected component of $\mathcal{G}(A_{sq} \setminus \{s\})$ and $t \notin S$.
		Then, $s \in \up{} \Rightarrow S \subset \up{}$.
	\end{lemma}
	\begin{proof}
		Assume that $s \in \up{}$ and also assume for contradiction that there exists some square $s' \in S \cap \down{}$.
		Since \down{} is connected, it must contain a path between $s'$ and $t$ that $s$ cannot be a part of, since $s \in \up{}$.
		This path is contained in $A_{sq} \setminus \{s\}$, which implies that $s'$ and $t$ must be in the same connected component of $\agraph(A_{sq} \setminus \{s\})$.
		Therefore, we have $t \in S$, which is a contradiction.
	\end{proof}

	\begin{lemma} \label{lemma:blocker}
		For any connected partition $\up{} \subset A_{sq}$ we have $B \subset \up{}$.
	\end{lemma}
	\begin{proof}
		In any partition, at least one square $s$ in the top row of $B$ must move up, otherwise no square can move up.
		Let $S$ and $S'$ be the two connected components of $\agraph(B \setminus \{s\})$, where $S$ corresponds to the subassembly containing the rightmost squares of $B$.
		By applying Lemma~\ref{lemma:split} to $S$, we have $S \subset \up{}$.
		As a result, the squares of $a_r$ that are above $S$ (represented by a portion of the topmost blue line segment in Figure~\ref{fig:full_squares}) must also move up.
		For \up{} to be connected, it must contain a path $P$ between these squares of $a_r$ and $S$.
		$P$ must contain $S'$, so overall we have $B \subset \up$.
	\end{proof}

	\begin{figure}[htb]
		\centering
		\includegraphics[width=0.39\textwidth]{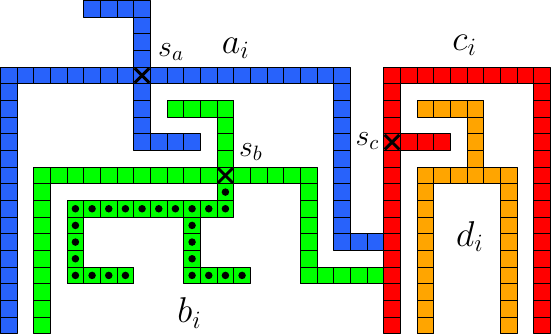}
		\caption{An approximate illustration of a clause gadget composed of unit-grid squares for a clause with two child clauses. We mark the key squares $s_a, s_b, s_c$ using a cross. $L(b_i)$ is the set of squares marked by dots.}
		\label{fig:clause_conversion}
	\end{figure}

	\paragraph{Converting clause gadgets}
	We turn our attention to clause gadgets, for which the main change is a modification of the so-called touching hooks:
	For each parent-child pair of clauses $(C_i, C_j)$, we add a gap between their two touching hooks, so that they no longer touch.
	Furthermore, all the hooks emanating downwards from $b_i$ now form a connected subassembly, denote by $L(b_i)$, which we call the \textit{lower hooks} of $b_i$.
	$L(b_i)$ connects to the rest of $b_i$ only via the square $s_b$; see Figure~\ref{fig:clause_conversion}.
	More precisely, $L(b_i)$ is the connected component of $\mathcal{G}(b_i \setminus \{s_b\})$ that neighbors the lower edge of $s_b$.
	We define $U(b_i)$ (\textit{upper hook} of $b_i$) to be the upper connected component of $\mathcal{G}(b_i \setminus \{s_b\})$, i.e., the component that neighbors the upper edge of $s_b$.
	(Recall that squares are considered touching only if they share an edge, so $\mathcal{G}(b_i \setminus \{s_b\})$ has four connected components.)
	We similarly define $U(a_i)$ and $L(a_i)$, as the upper and lower component, respectively, of $\mathcal{G}(a_i \setminus \{s_a\})$.
	
	The connection that the touching hooks used to provide between a child and parent clause is now maintained using a separate subassembly from the hooks, denoted by $C(b_i)$.
	We consider $C(b_i)$ to be a part of $b_i$ (for the purpose of defining a partition) but draw it in pink due to its intricacy.
	$C(b_i)$ touches the $a_j$'s of $C_i$'s child clauses without touching the upper hooks of the $a_j$'s nor $L(b_i)$.
	This is achieved by carefully placing the squares composing $C(b_i)$ between the said lower and upper hooks when there are many child clauses.
	See the zoomed-in view in Figure~\ref{fig:full_squares} for an example of a $C(b_i)$ when $C_i$ has multiple child clauses.
	
	We now prove a lemma analogous to Lemma~\ref{lemma:clause_gadget} in the polygonal construction.
	The requirement of Lemma~\ref{lemma:clause_gadget} is that the \textit{polygon} $a_i$ translates up, but because the corresponding \textit{quasi-polygon} $a_i$ is made of squares, we use an adapted requirement, which is that $U(a_i) \subset \up$:
	
	\begin{lemma}
		\label{lemma:square_clause_gadget}
		For a positive clause $C_i$, if $U(a_i) \subset \up$ then (i) $L(b_i) \subset \up$ and (ii) \ass{} satisfies $C_i$.
	\end{lemma}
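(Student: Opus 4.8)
The plan is to mirror the argument of Lemma~\ref{lemma:clause_gadget}, but now tracking connected components of quasi-polygons instead of rigid polygons, using Lemma~\ref{lemma:split} at each place where the old proof used ``interlocking'' or rigidity. Throughout, fix a connected partition with $U(a_i)\subset\up$ and consider the positive $3$-clause $C_i=(x_j\vee x_k\vee x_{k+1})$ with $\nvars{C_i}=\{x_k,x_{k+1}\}$, using the notation of Figure~\ref{fig:clause_conversion}.

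First I would establish part (i). The square $s_a$ sits so that $U(a_i)$ lies above it in its column; since $U(a_i)\subset\up$ and $s_a$ is blocked from above by (part of) $U(a_i)$, Lemma~\ref{lemma:split} (applied with $S$ the component of $\mathcal{G}(A\setminus\{s_a\})$ above $s_a$, i.e.\ $U(a_i)$) forces $s_a\in\up$, hence the whole of $a_i$ (all four components plus $s_a$) is in $\up$ except possibly $L(a_i)$; but $L(a_i)$ is the lower hook, which interlocks with $U(b_i)$ via the (now gapless within a clause) hook pair — so I would invoke Lemma~\ref{lemma:split} once more, in the direction that makes $U(b_i)\subset\up$ follow from $L(a_i)\subset\up$, and the chain $U(a_i)\to s_a\to L(a_i)\to U(b_i)\to s_b$ (each step an application of Lemma~\ref{lemma:split} because each lower square is blocked from below by the next component) finally yields $s_b\in\up$ and then $L(b_i)\subset\up$. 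I would write this as an explicit short induction along the ``spine'' of squares connecting $U(a_i)$ to $L(b_i)$.

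Next I would prove part (ii), the OR-gate behavior, exactly as in Lemma~\ref{lemma:clause_gadget} but phrased for quasi-polygons. From (i) we have $U(a_i)$ and $L(b_i)$ both in $\up$; the quasi-polygon $b_i$ also has its own upper hook $U(b_i)$, and these components of $b_i$ must all be connected inside $\mathcal{G}(\up)$. The only squares that can provide a connection between the $a_i$-side and the $b_i$-side (other than through $s_a,s_b$, which sit between the two halves) are the variable-rectangle $x_j$ or the quasi-polygon $c_i$; if neither $x_j$ nor $c_i$ is in $\up$ then $a_i$ and $b_i$ land in distinct components of $\mathcal{G}(\up)$, contradicting connectivity. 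So $x_j\in\up$ or $c_i\in\up$. In the latter case, repeat the argument one level down: $c_i$ and $d_i$ form the analogous OR gate, $c_i\in\up$ forces (via Lemma~\ref{lemma:split} along the $c_i$--$d_i$ spine) $d_i\in\up$, and the only parts that can connect $c_i$ to $d_i$ in $\up$ are the rectangles $x_k$ and $x_{k+1}$ of $\nvars{C_i}$; hence one of $x_k,x_{k+1}$ is in $\up$. In all cases at least one of $x_j,x_k,x_{k+1}$ is in $\up$, i.e.\ $\mathcal{A}(\up)$ sets at least one literal of $C_i$ true, which is what ``\ass{} satisfies $C_i$'' means. (The $2$-clause case is the same with the $c_i,d_i$ layer omitted, and negative clauses follow by the vertical mirror symmetry.)

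The main obstacle I anticipate is carefully justifying each invocation of Lemma~\ref{lemma:split}: its hypotheses require that $S$ be an entire connected component of $\mathcal{G}(A\setminus\{s\})$ and that every square of $S$ be blocked from below by a square outside $S$ in the same column. Checking the second condition means inspecting the geometry of Figure~\ref{fig:clause_conversion} component by component — verifying, for instance, that every square of $U(a_i)$ really does sit above $s_a$ or above some square of a lower component, and that the hook squares are positioned so the blocking-from-below property holds for $L(a_i)$ and $U(b_i)$ as well. I would handle this by making one explicit geometric observation per ``spine'' segment, noting that the quasi-polygons were constructed precisely so that cutting at $s_a$ (resp.\ $s_b$) disconnects exactly the four advertised components and that the relevant component lies strictly above the cut square. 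The rest of the argument is then a routine propagation of $\up$-membership along the spine and a connectivity count identical in structure to Lemma~\ref{lemma:clause_gadget}.
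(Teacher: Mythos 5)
Your outline captures the right high‑level shape — propagate $\up$‑membership down the spine $U(a_i)\to s_a\to L(a_i)\to U(b_i)\to s_b\to L(b_i)$ and then run the OR‑gate connectivity count from Lemma~\ref{lemma:clause_gadget} — but the first step, and the mechanism you assign to it, are wrong in a way that matters.

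The key step is getting from $U(a_i)\subset\up$ to $s_a\in\up$. You claim this follows from Lemma~\ref{lemma:split} ``applied with $S=U(a_i)$ and $s=s_a$,'' but Lemma~\ref{lemma:split} is a one‑way implication: it says $s\in\up\Rightarrow S\subset\up$. You are using its converse, which is not what the lemma asserts (and is false in general: knowing a component above a cut square went up says nothing, by itself, about the cut square, since $\up$ moves \emph{away} from it). Nor does a blocking argument help: $U(a_i)$ sitting above $s_a$ constrains $s_a$ only if $U(a_i)$ stays down, which is the opposite of the hypothesis. What actually forces $s_a\in\up$ is a \emph{connectivity} argument that needs an anchor: since $B\subset\up$ by Lemma~\ref{lemma:blocker}, and $\up$ is connected, $\G(\up)$ contains a path from $U(a_i)$ to $B$; every such path in $\G(A)$ passes through $s_a$, so $s_a\in\up$. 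You never invoke Lemma~\ref{lemma:blocker} or mention $B$ at all, so your argument has no anchor and the chain never gets started.

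More broadly, you fold three distinct mechanisms into ``an application of Lemma~\ref{lemma:split}'': (a) a cut square is forced into $\up$ because every path in $\G(A)$ from a set already in $\up$ to the anchor $B$ (or to another set in $\up$) passes through it; (b) once a cut square is in $\up$, Lemma~\ref{lemma:split} drags the blocked component below it into $\up$; (c) a component already in $\up$ drags into $\up$, by plain collision, whatever squares of a \emph{different} quasi‑polygon are in its way. Only step (b) is Lemma~\ref{lemma:split}. Steps $U(a_i)\to s_a$ and $U(b_i)\to s_b$ are type (a); step $L(a_i)\to U(b_i)$ is type (c) — it is a blocking relation between two quasi‑polygons, not a cut‑square statement at all, so Lemma~\ref{lemma:split} does not even apply in form. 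The same conflation recurs in part (ii): ``$c_i\in\up$ forces (via Lemma~\ref{lemma:split} along the $c_i$--$d_i$ spine) $d_i\in\up$'' again mixes a collision step (into $d_i$) with a Lemma~\ref{lemma:split} step (within $d_i$). The OR‑gate connectivity count itself is the right idea and matches the paper, but you need to restate the whole chain with the three mechanisms kept separate, and in particular you must start it with Lemma~\ref{lemma:blocker} and the $U(a_i)$‑to‑$B$ path argument; otherwise the proof does not go through.
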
 %
	\begin{proof}
		Let $C_i = (x_j \vee x_k \vee x_{k+1}), j < k$ be a positive clause.
		If $U(a_i) \subset \up$ then \up{} must contain a path connecting $U(a_i)$ to $B$, since $B \subset \up{}$ (by Lemma~\ref{lemma:blocker}).
		Observe that any such path must include the square $s_a$, so we have $s_a \in \up{}$.
		By invoking Lemma~\ref{lemma:split} on $L(a_i)$ (as $S$ in the lemma) and $s_a$, we get that $L(a_i) \subset \up$.
		Consequently, $L(a_i)$ forces the subassembly of $U(b_i)$ that is in its way, to translate up.
		As before, we must now have $L(a_i)$ and the subassembly of $U(b_i)$ that blocks $L(a_i)$ be connected by some path $P \subset \up{}$.
		Observe that in any case $P$ must contain the square $s_b$.
		This fact, together with Lemma~\ref{lemma:split}, results in $L(b_i) \subset \up$, which concludes the proof for (i).
		$P$ can either go through the variable gadget $x_j$ or go through the leftmost column of squares of $c_i$ (similarly to the two choices for connecting the polygonal counterparts of $a_i$ and $b_i$).
		Therefore, if $P$ does not go through the variable gadget $x_j$, it must contain a part of the leftmost column of $c_i$, in which case the square $s_c$ would also translate up, as it is in the way.
		By invoking Lemma~\ref{lemma:split} on $s_c$, the connected subassembly neighboring the right edge of $s_c$ (as $S$ in the lemma) must translate up.
		This forces a subassembly of $d_i$ to translate up, which means that we must have a path in \up{} connecting it and the subassembly of $c_i$ translating up.
		Such a path must contain one of the variable gadgets $x_k$ or $x_{k+1}$.
		In summary, one of the variable gadgets that correspond to variables in $C_i$ must translate up.
		This proves (ii) by the definition of $\ass{}$.
		The proof for 2-clauses is similar.
	\end{proof}
	
	We now use Lemma~\ref{lemma:square_clause_gadget}
	similarly to how we use Lemma~\ref{lemma:clause_gadget} in the hardness proof of \textsc{PPCST}.
	
	\begin{lemma}
		$\phi$ has a satisfying assignment if and only if $A_{sq}$ has a valid connected partition.
	\end{lemma}
	
	\begin{proof}
		If $\phi$ has a satisfying assignment, then the corresponding polygonal assembly $A$ has a connected partition, as specified in Section~\ref{ssec:polygon-construction}.
		This partition easily applies to $A_{sq}$ by treating each quasi-polygon as a regular polygon that must move rigidly together.
		All the squares comprising a quasi-polygon are assigned together to either \up{} or \down{} based on the direction the quasi-polygon is supposed to move in (and the new quasi-polygon $B$ would be sent up).
		The partition is valid because the connectivity and blocking relations between the quasi-polygons are the same as for the respective polygons.
		
		In the other direction, when given a connected partition $\up{} \subset A_{sq}$, we also proceed as before and take the assignment \ass{}, which assigns a variable to be \true{} if and only if the corresponding variable gadget moves up.
		Let $C_i$ be a positive clause.
		By Lemma~\ref{lemma:square_clause_gadget}(ii), it suffices to show that $U(a_i) \subset \up$.
		We begin with $C_r$, the unique root clause.
		The proof of Lemma~\ref{lemma:blocker}, which establishes that $B \subset \up$, also obtains $U(a_r) \subset \up$.
		For a $C_i$ with a parent $C_j$, if $U(a_j) \subset \up$ then by Lemma~\ref{lemma:square_clause_gadget}(i) we have $L(b_j) \subset \up$, which in turn forces $U(a_i) \subset \up$ (i.e., the leftmost square of $U(a_i)$ is forced to translate up due to blocking relations, which consequently forces the rest of $U(a_i)$ up due to the connectivity of \up{}).
		Therefore, by induction we get $U(a_i) \subset \up$ for all $i$.
		We may use all our previous claims in a symmetric manner to show that negative clauses are satisfied as well.
		In conclusion, the assignment \ass{} satisfies $\phi$.
	\end{proof}

\begin{figure}[htb]
    \centering
    \includegraphics[width=0.18\textwidth]{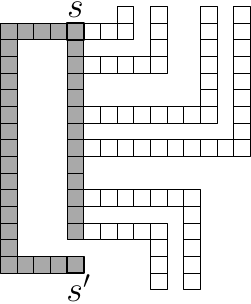}
    \caption{A variable gadget composed of unit-grid squares.}
    \label{fig:variable_gadget}
\end{figure}

	\paragraph{Converting variable rectangles} \label{sec:var gadget}
	We now describe how to convert variable rectangles to equivalent quasi-polygons.
	Figure~\ref{fig:variable_gadget} shows the quasi-polygon equivalent of a single variable rectangle.
	The gadget consists of the squares shaded gray in the figure.
	Let $V$ denote this set of squares.
	The unfilled squares are connections coming from clause gadgets.
	Each connection consists of a pair of paths, which touch the gadget from the right; in the figure there are two connections from positive clauses and one from a negative clause.
	We claim that replacing variable rectangles with this gadget preserves the correctness of the construction.
	To this end, we modify the definition of \ass{} so that it refers to the new gadget, i.e., a variable is assigned \true{} if and only if $V \subset \up{}$ (where $V$ is the corresponding gadget).
	To show that \ass{} still satisfies $\phi$ for a connected partition \up{}, we prove that all the squares in $V$ always move in the same direction, emulating a single rigid piece:
	\begin{lemma}
		For any connected partition $\up{} \subset A_{sq}$ we have either $V \subset \up{}$ or $V \subset \down{}$.
	\end{lemma}
	\begin{proof}
		We show that $V$ moves in the same direction as its top-right square $s$ (see Figure~\ref{fig:variable_gadget}):
		If $s \in \up{}$, then by Lemma~\ref{lemma:split} the connected subassembly of $V \setminus \{s\}$ resembling a C-shape must also move up. This subassembly includes the bottom-right square $s'$, which forces all the other squares of $V$ to move up too.
		Otherwise $s \in \down{}$, in which case the whole rightmost column of $V$ must also move down. Since \down{} is connected, all other squares in $V$ must then also move down to have a connection between $s$ and~$s'$.
	\end{proof}

	After converting the variable rectangles, $A_{sq}$ is a \gridass{} that has a valid connected partition if and only if $\phi$ is satisfiable.
	It is straightforward to verify that $A_{sq}$ fits inside an $O(m) \times O(n)$ rectangle, where $m$ and $n$ are the numbers of clauses and variables in $\phi$, respectively.
	We can construct $A_{sq}$ in polynomial time, which obtains the following result.
	
	\begin{theorem}
		\textsc{PPCST-GRID} is NP-complete.
	\end{theorem}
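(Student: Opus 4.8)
The plan is to prove that PPCST-SQ is NP-complete by combining the reduction from \newsatshort{} developed through Lemma~\ref{lemma:square_clause_gadget} with the two routine obligations: membership in NP, and correctness of the full square construction in both directions. Membership in NP is immediate, since a certificate is simply the subassembly $S = \up{}$; one verifies in polynomial time that $\mathcal{G}(S)$ and $\mathcal{G}(A \setminus S)$ are connected, and that $S$ can be translated to infinity in \dirup{} without meeting $A \setminus S$ (the latter is a one-shot translational-separability test, decidable in polynomial time, e.g.\ via the directional blocking graph of Wilson~\cite{wilson1994geometric} restricted to direction $\vec{d}$). Given $\phi \in$ \newsatshort{}, the assembly $A \subset \dX$ described in Section~\ref{subsec:square-construction} is clearly constructible in polynomial time from the rectilinear embedding of $G_\phi$, so it remains to establish the correctness equivalence: $\phi$ is satisfiable $\iff$ $A$ has a connected partition in \dirup{}.

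For the forward-hardness direction (a connected partition $\up{}$ exists $\Rightarrow$ $\phi$ is satisfiable), the argument mirrors the polygonal case almost verbatim, with the polygon-level hypotheses replaced by their quasi-polygon analogues. By Lemma~\ref{lemma:blocker} the top blocker $B$ satisfies $B \subset \up{}$; since every square of the construction is blocked from above by $B$, and since each $U(a_i)$ is (via the touching-hook chain of Lemma~\ref{lemma:interlocking_a_parts}, adapted to the square construction) forced into the same side as $B$, we get $U(a_i) \subset \up{}$ for every positive clause $C_i$. Lemma~\ref{lemma:square_clause_gadget}(ii) then says \ass{} satisfies $C_i$; the symmetric argument with the bottom blocker handles negative clauses. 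Hence \ass{} satisfies $\phi$. I would state the chain-interlocking claim for the square construction as a short lemma (the analogue of Lemma~\ref{lemma:interlocking_a_parts}, using the dark-green connector squares of $b_i$ together with Lemma~\ref{lemma:split} applied to the hook components) and cite the appendix for its proof.

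For the reverse direction (satisfying assignment $\mathcal{A}$ $\Rightarrow$ connected partition), I would mimic the recipe from the polygonal proof: put into \up{} exactly the true variable gadgets, all of $\cparts+$ \emph{except} that $c_i, d_i$ of a positive 3-clause $C_i$ are assigned to \up{} iff $\mathcal{A}$ makes at least one of $\nvars{C_i}$ true (symmetrically for negative clauses and \down{}), and the blocker $B$ to \up{}. Then I need the three structural facts, each now at the granularity of squares rather than rigid polygons: (a) no collision under the vertical translation — the proof of Lemma~\ref{lemma:no_collision} goes through because the neighboring-variable-pair property still guarantees that type-$c$/$d$ quasi-polygons of positive clauses can only potentially meet type-$c$/$d$ quasi-polygons of negative clauses, and only through a shared variable gadget, which is a contradiction; (b) \up{} and \down{} are connected — the argument of Lemma~\ref{lemma:conn_partition} transfers once we note $a_i, b_i$ are internally connected through $s_a, s_b$ and that the dark-green connector squares keep the hook chain connected, so all of $\cparts+ \cap \up{}$ forms one component to which the true variable gadgets and any $c_i, d_i \in \cparts- \cap \up{}$ attach; (c) the (now non-rigid) quasi-polygons do not get torn apart in critical places — this is exactly what Lemma~\ref{lemma:split} buys us, forcing $U(a_i), L(a_i), U(b_i), L(b_i)$ to move coherently wherever it matters. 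I would collect (a)--(c) and the membership argument into the theorem's proof, deferring the detailed verification of the square-specific connectivity bookkeeping to Appendix~\ref{sec:square_correctness_appendix} as the excerpt already announces.

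The main obstacle is precisely obligation (c): in the polygonal construction a part is a single rigid body, so "if one square of $a_i$ goes up, all of $a_i$ goes up" is free, whereas for quasi-polygons one must rule out the adversary splitting $a_i$ (or $b_i$, or $c_i$, $d_i$) along one of the four cut components around $s_a$ (resp.\ $s_b$) in a way that would either break connectivity of a subassembly or sneak a forbidden motion past the gadget logic. Lemma~\ref{lemma:split} is the key tool — it says a cut-square on the up side drags its entire "blocked-from-below" component up — but applying it correctly requires checking the geometric hypothesis (every square of the relevant component is blocked from below by a square in the complementary component) at each of the places where the proof of Lemma~\ref{lemma:square_clause_gadget} invokes it, and also checking that in the reverse direction the partition we build actually respects these forced co-movements rather than contradicting them. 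Getting the case analysis around $s_a, s_b, s_c$ exhaustive and consistent with the variable-gadget conversion of Appendix~\ref{sec:var gadget} is the delicate part; everything else is a faithful transcription of the polygonal argument.
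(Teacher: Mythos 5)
Your proposal is correct and follows essentially the same route as the paper: membership in NP via a polynomial-time separability and connectivity check, hardness via the square construction from \newsatshort{}, Lemma~\ref{lemma:blocker} to pin the blocker to $\up$, an inductive interlocking argument (the square analogue of Lemma~\ref{lemma:interlocking_a_parts}, which the paper proves inline via Lemma~\ref{lemma:square_clause_gadget}(i) rather than as a separate statement), and Lemma~\ref{lemma:square_clause_gadget}(ii) to conclude satisfaction clause by clause. One small organizational misstep, though: you attach your obligation (c) --- preventing the quasi-polygons from being torn apart --- to the direction ``satisfying assignment $\Rightarrow$ connected partition,'' where it is not needed. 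In that direction you are the one constructing the partition, so you simply assign each quasi-polygon as a block to $\up$ or $\down$; the paper handles it in a single sentence by ``treating each quasi-polygon as a regular polygon that must move together,'' and neither Lemma~\ref{lemma:split} nor any anti-tearing hypothesis is required. The tearing issue is genuine only in the opposite direction --- an arbitrary connected partition $\up$ could in principle split $a_i$, $b_i$, etc.\ across the cut squares $s_a$, $s_b$, $s_c$ --- and that is exactly where the paper deploys Lemma~\ref{lemma:split} (inside the proof of Lemma~\ref{lemma:square_clause_gadget} and Lemma~\ref{lemma:blocker}), which your ``forward-hardness'' paragraph already invokes implicitly through those lemmas. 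So (c) is not wrong so much as it is extra work in the wrong place; the rest of your structure matches the paper's.

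Two finer points worth tightening. First, you cite ``the touching-hook chain of Lemma~\ref{lemma:interlocking_a_parts}, adapted to the square construction,'' but in the square construction the touching hooks of the polygonal case have been replaced by a gap plus the dark-green connector squares, so the mechanism is different: the paper's inductive step is $U(a_j) \subset \up \Rightarrow L(b_j) \subset \up$ (via Lemma~\ref{lemma:square_clause_gadget}(i)) $\Rightarrow$ the leftmost square of the child's $U(a_i)$ is blocked and hence $U(a_i) \subset \up$ (via connectivity and Lemma~\ref{lemma:split} again). If you promote the chain to a named lemma, state it in these terms rather than by analogy with the rigid hooks. Second, your NP-membership sketch is fine, but note that the paper takes it as implicit; if you include it, the translational-separability check for a fixed direction reduces to verifying that no square of $A \setminus S$ lies strictly above a square of $S$ in the same grid column, which is linear and needs no directional blocking graph machinery.
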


\section{Fixed-Parameter Tractable Algorithm}\label{sec:FPT}
	In this section, we show that \textsc{PPCST} is fixed-parameter tractable, with the size of the partition as the parameter. The following theorem states the main result:
	\begin{theorem}
		\label{theo:FPT}
		Let $A$ be a planar assembly consisting of $n$ polygonal parts with a total of $N$ vertices, 
		and let $\vec{d}$ be a direction in the plane.
		The assembly $A$ can be preprocessed in $O(N(n+\log N))$ time so that for an integer $k\le n$, 
		the existence of a connected partition of $A$ of size at most $k$ in direction $\vec{d}$
		can be decided in  $O(2^k n^2)$ time. 
		The algorithm also returns a connected partition of size at most $k$ if one exists.
	\end{theorem}
	
	The value of $k$ can be assumed to be at most $n/2$, for otherwise we can ask the ``mirrored'' version of the question, i.e., whether there exists a connected partition of size $n-k$ in direction~$-\vec{d}$.
	Furthermore, if $k$ is not given as part of the input and the goal is to find a connected partition of $A$, we can 
	run our algorithm and its ``mirrored'' version for all possible values of~$k$ from~$1$ to~$n/2$ until we find a 
	connected partition or the algorithm does not find a connected partition even for~$k=n/2$.
	If $k^*$ is the smallest size of a connected partition, the total running time of the algorithm, including preprocessing, is $O(N(n+\log N)+2^{k^*} n^2)$.
	
	\subsection{Preliminaries} \label{sec:alg-prelim}
	Let $A \coloneqq \set{P_1, \ldots, P_n}$ be a planar assembly, i.e., a set of polygons with pairwise-disjoint interiors, 
	let $N \coloneqq \sum_i |P_i|$ be the total number of polygon vertices, and let $\vec{d}$ be a direction. Without loss of generality, we assume that $\vec{d}$ is \dirup{}.
	We note that if a polygon $P_i$ has a hole that 
	contains another polygon $P_j$ then for any partition $S$, either both $P_i$ and $P_j$ are in $S$ or neither of them is in $S$. Therefore we can ignore a polygon $P_j$ if it lies in the hole of another polygon $P_i$ and fill all holes of 
	$P_i$. This preprocessing can be done in a total of $O(N\log N)$ time.
	Hereafter we assume that each part $P_i$ of $A$ is a simple polygon.
	
	We define two graphs on the polygons of $A$. For each $P_i$, we choose a point $v_i$ in the interior of $P_i$. Let $V=\{v_1,\ldots,v_n\}$. $V$ is the set of vertices in both graphs.
	For a subassembly $S \subseteq A$, we define $\nodef{S} \coloneqq \{v_j \mid P_j \in S\}$ and set 
	$\compl{S} \coloneqq A\setminus S$. 
	
	\paragraph{Adjacency graph} 
	The first graph is the \emph{adjacency graph} $\agraph(A)=(V,E)$ of $A$, defined in Section~\ref{sec:intro}.
	Recall that $\agraph(A)$ is an undirected graph with $(v_i,v_j)\in E$ if and only if $P_i$ and $P_j$ are 
	\emph{edge-connected}, i.e., 
	$P_i\cap P_j$ contains a line segment of non-zero length. Since polygons in $A$ have pairwise-disjoint interiors, $\agraph(A)$ is planar and has $O(n)$ edges.
	$\agraph(A)$ can be computed in $O(N\log N)$ time by using a simple sweep-line algorithm. 
	For a subset $S\subseteq A$, let $\agraph(S)$ denote the subgraph of $\agraph(A)$ induced by $\nodef{S}$. 
	We use $\agraph$ to denote $\agraph(A)$.
	For simplicity, with a slight abuse of notation, we say that $S$ is \emph{connected} if $\agraph(S)$ is connected, 
	and that $S_1, \ldots, S_r$ are connected components of $S$ if $\nodef{S_1}, \ldots, \nodef{S_r}$ are connected components of $\agraph(S)$.
	
	We construct the following planar embedding of $\agraph(A)$, which will be treated as a plane graph throughout.
	For each edge $(v_i, v_j) \in E$ we fix a point $q_{ij}$ in the interior of $\bd P_i\cap \bd P_j$ (such that $q_{ij} \notin \bd P_\ell$,
	for $\ell \notin \{i,j\}$). Set $Q_i =\{q_{ij} \mid (v_i,v_j) \in E\}$.  Consider the geodesic shortest path $\ph_{ij}$ from 
	$v_i$ within $P_i$  to each $q_{ij}\in Q_i$. If a geodesic $\ph_{ij}$ intersects the boundary of $P_i$ at a point other 
	than $q_{ij}$ or overlaps with another geodesic $\ph_{ij'}$, we perturb it slightly so that it lies in the interior of $P_i$ (except at $q_{ij}$) and does not intersect any  other geodesic except at their common source point $v_i$.
	We omit the straightforward details of this perturbation and, slightly abusing the notation, use $\ph_{ij}$ to denote the resulting perturbed path.
	For each edge $(v_i,v_j)\in E$, we view the path $\ph_{ij}\circ\ph_{ji}$ from $v_i$ to $v_j$ via $q_{ij}$ as its embedding.  The resulting path lies in the interiors of $P_i$ and of $P_j$ except at $q_{ij}$ and does not 
	intersect any other polygon.
	See Figure~\ref{fig:ab-graphs} for an example. 
	
	We can compute this planar embedding of $\agraph(A)$ in $O(N(n+\log N))$ time by first computing a 
	triangulation of all $P_i$'s in $O(N\log N)$ time using a simple sweep-line algorithm (or in $O(N)$ time 
	using the linear-time triangulation algorithm~\cite{Ch91}) and then computing geodesics within each $P_i$ 
	in a total of $O(Nn)$ time~\cite{GHL*}.\footnote{More precisely, computing all the geodesics can be done in $O(N_0 n)$ time, where $N_0$ is the maximum complexity of a single polygon in $A$; we use the coarser $O(Nn)$ time for simpler exposition.}
	
	\begin{figure}[hbt]
		\centering
		\begin{tabular}{cccc}
			\includegraphics[page=1, width=0.22\textwidth]{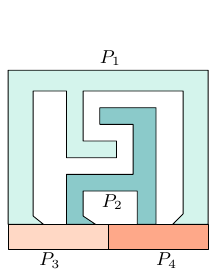} %
			&
			\includegraphics[page=2, width=0.22\textwidth]{figures/assembly_with_graph.pdf} 
			&
			\includegraphics[page=3, width=0.20\textwidth]{figures/assembly_with_graph.pdf}
			&
			\includegraphics[page=4, width=0.20\textwidth]{figures/assembly_with_graph.pdf}
			\\
			(a)&(b)&(c)&(d)
		\end{tabular}
		\caption{(a) An assembly $A$.
			(b) The planar embedding of $\agraph(A)$ overlaid on top of $A$; the $q_{ij}$'s are drawn as small white discs.
			(c) $\agraph(A)$.
			(d) The blocking graph $\bgraph(A)$. For the subassembly $S=\set{P_4}$ we have $CS(S)=\set{P_1, P_2, P_4}$.
		}
		\label{fig:ab-graphs}
	\end{figure}

	\paragraph{Blocking graph}
	The second graph, denoted by $\bgraph(A)$, is the \emph{blocking graph} of $A$. It is a directed graph that 
	represents the collision relation between the polygons of $A$. $\bgraph=\bgraph(A)$ has a directed edge $v_i \rightarrow v_j$ if 
	and only if $P_j$ is visible from $P_i$ in \dirup{}, namely, there is a point $x\in P_i$ such that the ray emanating 
	from $x$ in \dirup{} intersects $P_j$ before intersecting any other polygon of $A$.
	It is not hard to see that $\bgraph$ is planar and thus has $O(n)$ edges. 
	$\bgraph$ can be computed in $O(N\log N)$ time by sweeping a vertical line from $x=-\infty$ to $x=+\infty$, 
	maintaining the intersection of the sweep line with the polygons of $A$, and observing that if $P_j$ is vertically visible from $P_i$ then they will appear next to each other for some value of $x$ during the sweep.
	See Figure~\ref{fig:ab-graphs} for an example.
	
	Our blocking graph is similar to the \emph{directional blocking graph} of Wilson and Latombe~\cite{wilson1994geometric}; we construct a more compact version of the blocking graph, which is sufficient for our needs---we only record immediate, visible, blocking relations, whereas they record all pairs ($P_i,P_j$) where $P_j$ blocks $P_i$.

	For a subassembly $S \subseteq A$, we define the \emph{collision set} of $S$, denoted by $\CS(S)$, to be 
	$$\CS(S) \coloneqq \{P_i \in A \mid v_i \text{ is reachable from } \nodef{S} \text{ in } \bgraph(A)\},$$
	i.e., the set $S$ itself plus the polygons of $A$ that have to be translated upward in order for $S$ to translate upward
	without collisions. Given $S$, $\CS(S)$ can be computed in $O(n)$ time by performing a depth-first search from 
	$\nodef{S}$ in $\bgraph$. The following lemma 
	which follows immediately from the definition of the collision set, states the properties of $\CS(S)$ that we will need.
	\begin{lemma}
		\label{lem:shadow}
		For any subset $S\subseteq A$,
		\begin{enumerate}[label=(\roman*)]
			\item $\CS(S)$ can be moved upward without colliding with $\compl{\CS(S)}$.
			\item If $S^*$ is a connected  partition that contains $S$, then $\CS(S) \subseteq S^*$.
		\end{enumerate}
	\end{lemma}
	
	We compute $\agraph(A)$, including its planar embedding, and $\bgraph(A)$ as a preprocessing step and assume that we have them at our disposal during the algorithm. The total time spent preprocessing is $O(N(n+\log N))$.

	\subsection{Overall algorithm}
	
	We describe an algorithm that given $A$ 
	and a parameter $k \le n/2$ determines, in $O(2^k n^2)$ time,  whether there 
	exists a connected partition $S^* \subset A$ of size at most $k$ in \dirup{}, namely, $S^*$ can be moved arbitrarily far away in \dirup{} without colliding with $\compl{S^*}$ and both $S^*, \compl{S^*}$ are connected.
	If the answer is yes, the algorithm also returns $S^*$.

	Our overall approach is based on the following observation.
	Since $\agraph(A)$ is connected, for any partition $S \subset A$, there is a pair of ``adjacent'' polygons 
	$P_s \in S$, $P_t \in \compl{S}$ such that $(v_s,v_t) \in E$.
	We therefore fix an edge $(v_s,v_t)\in E$  and describe a recursive algorithm, called \algname{}, for computing a 
	connected partition $S^*$ of size at most $k$, with the additional property that $P_s \in S^*$ and~$P_t \notin S^*$, if there exists one. We refer to such an $S^*$ as an \emph{$(s,t)_k$-partition}. \algname{} takes $O(2^k n)$ time.
	By repeating this algorithm for all $O(n)$ edges of $\agraph$, we compute a connected partition of size at most $k$, 
	if there exists one, in $O(2^kn^2)$ time.
	
	\subsection{The \algname{} procedure}
	Each recursive call of \algname{} receives as input a ``partial solution''---a (possibly not connected) subassembly $S \subset A$ of size at most $k$ with $P_s\in S$ and $P_t\not\in S$---along with $P_t$ and $k$.\footnote{The arguments $P_t$ and $k$ remain fixed in all recursive calls of \algname, so for simplicity, we will ignore them, and focus on $S$.}
	The algorithm grows $S$ until it either finds an $(s,t)_k$-partition $S^*\supseteq S$ 
	or determines that there is no $(s,t)_k$-partition containing $S$. 
	It assumes the following property for the input partial solution $S$:
	\begin{itemize}
		\item[(*)] $S$ spans a contiguous portion of the $x$-axis, i.e., the union of $x$-projection of polygons in 
		$S$ is a single interval.
	\end{itemize}
	Initially, \algname{} is called with the partial solution $S_0 \coloneqq \{P_s\}$, which satisfies (*).
	
	Given a subassembly $S \subseteq A$ that satisfies (*),
	each recursive call of \algname{} performs one of the following three actions:
	\begin{itemize}
		\item[(i)] It returns  an $(s,t)_k$-partition $S^*$ containing $S$, in which case the overall algorithm stops and returns $S^*$.
		\item[(ii)] It concludes that there is no $(s,t)_k$-partition containing $S$, in which case the algorithm 
		aborts that recursive call. 
		\item[(iii)] It returns (at most) two subassemblies $S_1, S_2$, each properly containing  $S$ such that any 
		$(s,t)_k$-partition that contains $S$ also contains at least one of $S_1$ or $S_2$. In this case, 
		the algorithm is called recursively with $S_1$ and with $S_2$.
	\end{itemize}
	
	\algname{} performs these actions in the following two steps.
	The first step computes $B \coloneqq \CS(S)$, in $O(n)$ time, by performing a depth-first search on $\bgraph$,
	starting from the 
	vertices of $S$. By Lemma~\ref{lem:shadow}, 
	$B$ is the necessary set of polygons of $A$ required to allow the translation of $S$ to infinity in the upward direction 
	without collisions, resulting in an augmented subassembly $B \supseteq S$.
	If $|B|>k$ or $P_t\in B$, we conclude that there is no $(s,t)_k$-partition containing $S$, and we return \false{}.
	So, assume $|B| \le k$ and $P_t\notin B$.
	If both $B$ and $\compl{B}$ are connected, then $B$ is an $(s,t)_k$-partition  (see Lemma~\ref{lem:shadow}(i)), 
	and we return $B$.
	
	If either $B$ or $\compl{B}$ is not connected, the second step
	finds at most two sets of polygons to add to $B$, such that at least one of these two subsets needs to be added to $B$ to form a connected partition and such that the number of connected components in the disconnected subassembly is 
	reduced. This step proceeds as follows:
	If $\compl{B}$ is not connected, then let $T_1, \ldots, T_r$ be its connected 
	components that do not contain $P_t$.
	These components must be added to $B$ for $\compl{B}$ to be connected. If $|B|+\sum_i |T_i|>k$ then there is 
	no $(s,t)_k$-partition containing $B$, and we return \false{}.
	Otherwise, we recursively call \algname{} with
	$B \cup T_1 \cup \cdots \cup T_r$, which concludes this case.
	
	Next, we assume that $\compl{B}$ is connected but $B$ is not connected.
	In this case we call the subroutine \subprocname{}$(B)$, which, in $O(n)$ time, identifies at most two candidate subassemblies $B \subset S_1, S_2 \subseteq A$, such that each of $S_1, S_2$ has fewer connected components than $B$, and 
	any connected partition $S^*$ containing $B$ contains at least either $S_1$ or $S_2$ (cf.\ Lemma~\ref{lem:connect}).
	For $i=1,2$, if $|S_i| \le k$ and $P_t\not\in S_i$, we recursively call \algname{} with on $S_i$. Otherwise 
	($|S_i|>k$ or $P_t\in S_i$ for both $i$), there is no $(s,t)_k$-partition 
	containing $B$ and we return \false{}.
	
	This concludes the description of the \algname{} procedure.
	For an integer $0 \le x \le k$, let $\tau(x)$ be the maximum running time of 
	$\algname$ on a partial solution of size $k-x$. 
	For $x=0$, the procedure spends $O(n)$ time, as no recursive call is made.
	For $x>0$, it spends $O(n)$ time on non-recursive processing and makes at most two recursive calls, each
	with a partial solution of size larger than $k-x$. Hence, we obtain the following recurrence,
	\[
	\tau (x) \le \left \{ \begin{array}{ll}
		2\tau(x-1) + O(n) & \mbox{for $x>0$,}\\[1mm]
		O(n) &  \mbox{for $x=0$.}
	\end{array}
	\right .
	\]
	The solution to the above recurrence is easily seen to be $O(2^{x} n)$, as the depth of the recursion is at most 
	$x$. Since $x=k-1$ for the initial call of \algname, we obtain the following:
	
	\begin{lemma}
		\label{lem:augment}
		Given a pair $(P_s,P_t)$ and an integer $k>0$, \algname{} computes, in $O(2^k n)$, time an $(s,t)_k$-partition,
		if one exists.
	\end{lemma}
	
	\subsection{The \subprocname{} procedure}
	
	Next, we describe the \subprocname{} procedure.
	It receives a subassembly $B \subseteq A$ that is not connected and has the following properties: 
	\begin{enumerate}
		\item[(C1)] $B$ satisfies (*), 
		\item[(C2)] 
		$B$ can be translated upward without colliding with $\compl{B}$, and \label{prop:B_upwards} %
		\item[(C3)] $\compl{B}$ is connected.
	\end{enumerate}
	It computes at most two candidate subassemblies $B \subset S_1, S_2 \subseteq A$, such that each of 
	$S_1, S_2$ has fewer connected components than $B$, and any $(s,t)_k$-partiton contains at least one 
	of $S_1$ and $S_2$.
	Before describing the \subprocname{} procedure in detail, we introduce a few concepts and prove a few structural properties of $\agraph$.

	\paragraph{Shadow of a subassembly}
	For a subassembly $S \subseteq A$, we define the \emph{shadow} of $S$, denoted by $\Sh(S)$, to be the union of all rays in \dirup{} that emanate from points in the polygons of $S$, namely,
	\[ \Sh(S) = \bigcup_{P\in S}\bigcup_{p\in P} \{ p+ (0,\lambda) \mid \lambda\ge 0\} .\]
	We define $L(S)$ to be the boundary of $\Sh(S)$.
	If $S$ satisfies (*) then $L(S)$ is an $x$-monotone polygonal chain consisting of: (i) the lower envelope 
	of the edges of the polygons in $S$ (i.e., portions of the edges that are visible from $y=-\infty$), 
	(ii) vertical segments connecting consecutive breakpoints of the lower envelope,
	and (iii) two rays in \dirup{} emanating from the leftmost and the rightmost endpoints of the lower envelope.
	See Figure~\ref{fig:shadow}.
	We set $L \coloneqq L(B)$ and orient $L$ in counterclockwise manner so that $\Sh(B)$ lies to the left of $L$ and 
	$\compl{B}$ lies to the right of $L$.
	Since $B$ can be translated upward without colliding with $\compl{B}$, $L$ does not intersect the interior of 
	any polygon of~$A$.
	
	\begin{figure}[htb]
		\centering
		\begin{tabular}{cccc}
			\includegraphics[page=5, width=0.22\textwidth]{figures/assembly_with_graph.pdf}
			&
			\includegraphics[page=6, width=0.22\textwidth]{figures/assembly_with_graph.pdf} 
			&
			\includegraphics[page=7, width=0.22\textwidth]{figures/assembly_with_graph.pdf}
			&
			\includegraphics[page=8, width=0.2\textwidth]{figures/assembly_with_graph.pdf}
			\\
			(a)&(b)&(c)&(d)
		\end{tabular}
		\caption{The examples in this figure are based on the subassembly $B \coloneqq \set{P_1, P_2}$ of the assembly in Figure~\ref{fig:ab-graphs}.
			(a) Shadow $\Sh(B)$ ;
			(b) its boundary $L(B)$ (thick blue), shown with $A$ and $\agraph(A)$;
			(c) the augmented graph $\rgraph$, which has three new edges along $L(B)$ (thick green), shown with $A$ and $L(B)$;
			(d) and the subgraph $\ograph$.}
		\label{fig:shadow}
	\end{figure}
	
	\paragraph{Refining the adjacency graph}
	We now define a plane graph $\rgraph$ that we obtain by splitting some of the edges of $\agraph$ and adding some edges so that, roughly speaking, $L$ becomes part of the graph:
	Let $(v_i,v_j)$ be a ``cut'' edge of the subassembly $B$, i.e., $P_j \in B$ and $P_i\in \compl{B}$. By construction, 
	$q_{ij}$, the point added to $\bd P_i\cap \bd P_j$ to compute the embedding of $(v_i,v_j)$, lies on $L$. 
	We add $q_{ij}$ as a vertex of $\rgraph$ and replace the edge $(v_i,v_j)$ in $\agraph$ by two edges $(v_i, q_{ij})$ and $(v_j,q_{ij})$, with $\ph_{ij}, \ph_{ji}$, respectively, being their embeddings. 
	Let $Z=\set{z_1,\ldots,z_m}$ be the sequence of the new nodes (i.e., $q_{ij}$'s) added to $\rgraph$ sorted in counterclockwise order along $L$.
	For each $1 \le i < m$, we also add the edge $e_i \coloneqq (z_i, z_{i+1})$ to $\rgraph$, with the portion of $L$ 
	between $z_i$ and $z_{i+1}$ being its embedding.
	For $1\le a<b\le m$, let $L[a,b] \coloneqq \langle z_a, z_{a+1}, \ldots, z_b\rangle$ denote the path in $\rgraph$ 
	from $z_a$ to $z_b$ along $L$.
	Clearly, $\rgraph$ is also a simple planar graph with $O(n)$ vertices and edges.
	The planar embedding of $\rgraph$ is the same as that of $\agraph$ plus the portion of $L$ between $z_1$ and $z_m$.
	See Figure~\ref{fig:shadow} for an example of $\rgraph$ and its planar embedding.
	
	Since $B$ is not connected, 
	let $B_1, \ldots, B_u \subset B$, for $u>1$, be the connected components of $B$.
	Let $Z_i = \set{z_j \mid z_j \in \bd P \text{ for some } P \in B_i}$ be the subsequence of the newly added vertices that 
	lie in $B_i$.
	The assembly $A$ being connected implies that $Z_i \ne \emptyset$ for every $i=1,\ldots, u$, as at least one vertex in $\nodef{B_i}$ is connected to $\nodef{\compl{B}}$.
	The following two properties of the sets $Z_i$ will be crucial for our algorithm.
	
	\begin{lemma}
		\label{lem:nested}
		The nodes in $Z_1, \ldots, Z_u$ form a nested structure, i.e., there are no six indices $i< j \le u$  and $a < b < c < d$ such that $z_a, z_c  \in Z_i$ and $z_b, z_d \in Z_j$.
	\end{lemma}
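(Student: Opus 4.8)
The plan is to prove Lemma~\ref{lem:nested} by contradiction, leveraging planarity of the graph $\G(\Sh(B))$ and the Jordan-curve-type structure imposed by the boundary curve $L = L(B)$. Suppose, for contradiction, that there exist indices $i < j$ and positions $a < b < c < d$ along $L$ with $s_a, s_c \in Z_i$ (so $s_a, s_c$ belong to the same connected component $B_i$ of $\G(B)$) and $s_b, s_d \in Z_j$ (belonging to $B_j$). The idea is that the "interleaved" pattern $a,b,c,d$ of two pairs forces two connecting paths inside the shadow to cross, contradicting planarity.

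First I would make precise what "inside the shadow" means: since $s_a$ and $s_c$ both lie in $B_i$, which is connected, there is a path $P_i$ in $\G(B_i) \subseteq \G(\Sh(B))$ joining them; similarly a path $P_j$ in $\G(B_j)$ joining $s_b$ and $s_d$. All four squares $s_a, s_b, s_c, s_d$ lie on $L$, the boundary of the unbounded face of $\G(\Sh(B))$, and they appear in this cyclic (here linear, along the $x$-monotone bi-infinite path, after the non-degeneracy assumption removes the downward chains) order $a < b < c < d$. Now close up the region: the curve $L$ together with the "point at infinity" bounds the region whose interior contains $\Sh(B)$ (all of $\Sh(B)$ lies on one side of $L$). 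Consider the closed Jordan curve $J_i$ formed by $P_i$ together with the sub-arc of $L$ from $s_a$ to $s_c$ that avoids $s_b$ and $s_d$ — wait, that arc contains $s_b$; so instead I would argue directly on the planar embedding: $P_i$ is a path drawn in the plane, both of whose endpoints lie on the outer boundary $L$, so $P_i$ together with one of the two arcs of $L$ between $s_a$ and $s_c$ encloses a bounded region $D_i$. Since $s_b$ lies strictly between $s_a$ and $s_c$ on $L$ and $s_d$ lies strictly after $s_c$, exactly one of $s_b, s_d$ lies on each of the two arcs of $L \cup \{\infty\}$ determined by $s_a, s_c$; hence $s_b$ and $s_d$ are separated by the closed curve $J_i = P_i \cup (\text{arc of } L)$. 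Therefore the path $P_j$ from $s_b$ to $s_d$ must cross $J_i$. It cannot cross the arc of $L$ (that arc is on the outer face boundary and $P_j \subseteq \G(\Sh(B))$ stays within the shadow), and it cannot share a vertex with $P_i$ because $P_i \subseteq B_i$, $P_j \subseteq B_j$, and $B_i, B_j$ are disjoint connected components of $\G(B)$. This contradicts planarity of $\G(\Sh(B))$.

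The key steps in order: (1) translate the nesting hypothesis into the existence of two vertex-disjoint paths $P_i \subseteq \G(B_i)$, $P_j \subseteq \G(B_j)$ with endpoints on $L$ interleaved in the order $s_a, s_b, s_c, s_d$; (2) use the non-degeneracy of $\Sh(B)$ so that $L$ is an $x$-monotone bi-infinite path, making "the order along $L$" unambiguous and making $L \cup \{\infty\}$ a genuine Jordan curve with $\Sh(B)$ entirely on its (bounded) inner side; (3) observe that $P_i$ plus a sub-arc of $L$ forms a closed curve separating $s_b$ from $s_d$ in the plane, with $\Sh(B)$ confined to one side; (4) conclude $P_j$ must cross $P_i$, contradicting that $B_i$ and $B_j$ are distinct components (so the paths share no vertex) and that $\G(\Sh(B))$ is planar (so the paths share no crossing point).

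The main obstacle I anticipate is handling the topology cleanly — in particular, being careful about the bi-infinite nature of $L$ and the role of the point at infinity, and making the Jordan-curve separation argument rigorous rather than hand-wavy. One must be sure that the arc of $L$ used to close up $P_i$ does not itself pass "around infinity" in a way that places $s_b$ and $s_d$ on the same side; this is exactly where the linear (not cyclic) order $a<b<c<d$ along the monotone path, plus the convention that $\Sh(B)$ lies on the right of $L$, is used. A secondary subtlety is the degenerate case, but the footnote in the paper already reduces to the non-degenerate situation, so I would simply invoke that. I would also double-check that $Z_i \neq \emptyset$ for all $i$ (stated just before the lemma, using connectivity of $A$), so that $\ell_i, r_i$ are well-defined and the indices $a, c$ (resp.\ $b, d$) genuinely exist as claimed.
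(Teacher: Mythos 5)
Your proof is correct and follows essentially the same strategy as the paper's: take paths $P_i \subseteq \G(B_i)$ and $P_j \subseteq \G(B_j)$ joining the interleaved endpoints on $L$, observe that both lie on one side of $L$ inside the shadow, and conclude from planarity that they must cross, contradicting the disjointness of the components. Your write-up simply makes the Jordan-curve separation argument explicit, which the paper leaves implicit.
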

	\begin{proof}
		For a contradiction, suppose such a set of indices exists.
		Let $\ograph$ be the subgraph of $\rgraph$ induced by the vertices of $\nodef{B}\cup Z$; $\ograph$ lies in 
		$\Sh(B)$ and $\agraph(B)$ is a subgraph of $\ograph$; see Figure~\ref{fig:shadow}~(d).
		For a vertex $z_p \in Z$, let $p'$ be the index of the polygon of $B$ whose boundary contains $z_p$, 
		i.e., $z_p \in \bd P_{p'}$.  
		Since $v_{a'}$ and $v_{c'}$ lie in the same connected component of $\agraph(B)$, there is a path $\pi'_i$ between them.
		Let $\pi_i := (z_a,v_{a'}) \circ \pi'_i \circ (v_{c'},z_c)$ be the path from 
		$z_a$ to $z_c$ in $\ograph$; by construction $\pi_i$ does not use any vertex of $Z \setminus \{z_a,z_b\}$. 
		Similarly define the path $\pi_j$ from $z_{b}$ to $z_{d}$ in $\ograph$.
		Since (the planar embeddings of) $\pi_i$ and $\pi_j$ are simple polygonal curves lying inside $\Sh(B)$ with their 
		endpoints on $L$ and alternating along $L$, they cross each other. Furthermore 
		$\ograph$ is planar, therefore $\pi_i$ and $\pi_j$ must share a vertex, which, by construction, is 
		a vertex of $\nodef{B}$.  However, this contradicts the fact that 
		$B_i$~and~$B_j$ are two different connected components of $B$.
	\end{proof}

	For each $i\le u$, let $\ell_i$ (resp.\ $r_i$) be the index of the first (resp.\ last) vertex of $Z_i$ along $L$.
	We say that $Z_i$ \emph{appears before} $Z_j$ if $\ell_i < \ell_j$.
	Let us fix the indices of the sets $Z_i$ so that they are ordered according to $\ell_i$, namely, 
	$\ell_i < \ell_j$ for $i<j$.
	Next, we show that $Z_u$, the set that appears last, forms a contiguous subsequence in $Z$, namely no other $Z_i$ shows up between $z_{\ell_u}$ and $z_{r_u}$ along $L$.
	
	\begin{lemma}
		\label{lem:cont}
		$Z_u = \{ z_i \mid \ell_u \le i \le r_u \}$.
	\end{lemma}
	\begin{proof}
		Suppose to the contrary there is an index $i$ such that $\ell_u < i < r_u$ and $z_i \in Z_j$ for some $j < u$, i.e., $\ell_j < \ell_u$.
		We now have four indices $\ell_{j} < \ell_u < i < r_u$ such that $z_{\ell_{j}}, z_i \in Z_{j}$ and $z_{\ell_u}, z_{r_u} \in Z_u$, contradicting Lemma~\ref{lem:nested}.
	\end{proof}
	
	\begin{lemma}
		\label{lem:bounded}
		For any edge $e_i = (z_i, z_{i+1})$ in $\rgraph$, the face adjacent to $e_i$ and lying to its right 
		(i.e., not lying in $\Sh(B)$) is bounded.
	\end{lemma}
	\begin{proof}
		Let $F$ be the face adjacent to $e_i$ lying to its right.
		Let $q_{ab}$ and $q_{cd}$ denote the endpoints of $e_i$, where $P_a, P_c \in \compl{B}$ (and possibly $a=c$).
		By definition, $\rgraph$ contains the edges $e'=(v_a, q_{ab})$ and $e''=(v_c, q_{cd})$.
		Since $\agraph(\compl{B})$ is connected subgraph of $\rgraph$, there is a path $\gamma$ 
		between $v_a$ and $v_c$ in $\rgraph$ that lies to the right of $L$.
		Therefore, $e' \circ \gamma \circ e'' \circ e_i$ is a cycle in $\rgraph$ that lies on or to the right of $L$. 
		This cycle bounds the face $F$ because $F$ contains $e_i$ and also lies to the right of $L$. See Figure~\ref{fig:bounded_faces}.
	\end{proof}

	\begin{figure}[H]
		\centering
		\includegraphics[width=0.32\textwidth]{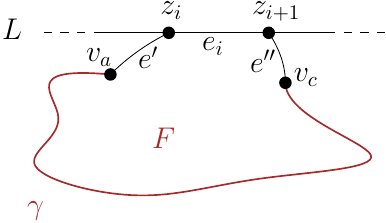}
		\caption{A partial view of $\rgraph$ in an embedding in which $L$ is stretched out on a line, illustrating Lemma~\ref{lem:bounded}.
		Here $z_i = q_{ab}$ and $z_{i+1} = q_{cd}$.
		}
		\label{fig:bounded_faces}
	\end{figure}

	\paragraph{Computing $S_1$ and $S_2$}
	We are now ready to describe the  \subprocname{} procedure, which
	chooses additional polygons to be added to~$B$.
	Given $B$, we first compute $\rgraph$ and the sets $Z_1, \ldots Z_u$. 
	
	Next, we identify two edges $e_<$ and $e_>$ in $\rgraph$, as follows.
	Recall that $u>1$, so $\ell_u > 1$.  We define the edge $e_<$ to be $e_{\ell_u - 1} = (z_{\ell_u-1}, z_{\ell_u})$ of $\rgraph$.
	If $Z_u$ does not contain the last point of $Z$ along $L$ (i.e., $r_u < m$), then we define $e_> := e_{r_u} = (z_{r_u}, z_{r_u+1})$; otherwise ($r_u=m$) $e_>$ is undefined. 
	
	For concreteness, we assume that both $e_{<}$ and $e_{>}$ exist, otherwise the situation is simpler.
	We orient $e_{<}, e_{>}$ in counterclockwise direction along $L$.
	Let $F_{<}$ (resp.\ $F_{>}$) be the face of $\rgraph$ adjacent to $e_{<}$ (resp.\ $e_{>}$) lying to its right.
	By Lemma~\ref{lem:bounded}, both faces are bounded.

	Consider $\bd F_{<}$, the boundary of the face $F_{<}$. $\bd F_<$ is a non-self-crossing tour in~$\rgraph$; it need not 
	be simple, and  edge $e_{<}$ is part of $\bd F_{<}$. 
	Define $S_< \subseteq \compl{B}$ to be the set of polygons corresponding to those nodes of 
	$\bd F_{<}$ that do not lie on $L$ (the endpoints of $e_<$ are the only nodes that lie on $L$).
	We define $S_{>}$ analogously for $F_{>}$.
	We return the sets $S_1 \coloneqq B \cup S_>$ and $S_2 \coloneqq B\cup S_<$.  The running time of
	\subprocname{} is $O(n)$.
	
	\paragraph{Proof of correctness}
	It can be verified that $S_1$ and $S_2$ satisfy the invariant~(*), so we
	now prove that any $(s,t)_k$-partition containing $S$ contains either $S_1$ or $S_2$.

	\begin{lemma}
		\label{lem:cycle}
		Let $\pi$ be a simple path in $\rgraph$ whose endpoints $z_a, z_b$ lie in $Z$ and the rest of the nodes lie 
		in $\nodef{\compl{B}}$ and let $\Delta$ be the region bounded by the cycle $\pi\circ L[a,b]$ (in $\rgraph$).
		Let $\compl{B}_\pi \subseteq \compl{B}$ be the set of polygons 
		corresponding to the nodes of $\pi$ except its endpoints $z_a, z_b$, and let 
		$S \subseteq \compl{B}$ be the set of polygons
		corresponding to the nodes of $\agraph(\compl{B})$ that lie in $\Delta$. 
		Then $S \subseteq \CS(\compl{B}_\pi)$.
	\end{lemma}
	
	\begin{proof}
		Let us fix a polygon $P_j \in S$. Then the point $v_j\in P_j$ lies inside the polygonal cycle formed by (the planar embedding of) $\pi \circ L[a,b]$.
		The ray $\rho$ emanating from $v_j$ in \dirdown{} intersects $\pi$; $\rho$ does not intersect $L[a,b]$ because 
		$v_j$ lies below the $x$-monotone chain $L$. Every point in $\pi$ lies inside a polygon of $\compl{B}_\pi$, so 
		$\rho$ intersects a polygon $P_i \in \compl{B}_\pi$, which implies that $P_i$ collides with $P_j$ if we move 
		$P_i$ in \dirup. Hence, $P_j \in \CS(\compl{B}_\pi)$, as claimed.
	\end{proof}

	\begin{lemma}
		\label{lem:minimal}
		Let $S^*$ be any $(s,t)_k$-partition that contains $B$, then $B \cup S_{<} \subseteq S^*$ or $B \cup S_{>} \subseteq S^*$.
	\end{lemma}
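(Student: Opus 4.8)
The plan is to reduce the statement to showing that $\Pi_<\subseteq S^*$ or $\Pi_>\subseteq S^*$: indeed $B=\Sh(S)\cap A\subseteq S^*$ by Lemma~\ref{lem:shadow}(ii), so it only remains to place one of $\Pi_<,\Pi_>$ inside $S^*$. I will use two facts repeatedly. First, $S^*$ is \emph{shadow-closed} inside $A$: if $s\in S^*$ and $s'\in A$ lies above $s$ in the same column then $s'\in S^*$, for otherwise $s'\in A\setminus S^*$ would obstruct the upward translation of $S^*$ to infinity; equivalently, $\Sh(X)\cap A\subseteq S^*$ for every $X\subseteq S^*$. Second, since $\G(B)$ is disconnected while $A$ and $\G(S^*)$ are connected and $B\subseteq S^*$, the component $B_i$ of $\G(B)$ whose frontier is $Z_i$ (the block that appears last along $L$) is a proper subset of $S^*$, hence $S^*$ must join $B_i$ to another component of $\G(B)$.

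Next I would extract a suitable connecting path. From the connectivity of $\G(S^*)$ one obtains a simple path $\pi$ in $G$ from a square $s_a\in Z_i$ to a square $s_b\in Z_j$ with $j\ne i$, whose internal vertices all lie in $R\cap S^*$. Here one uses that distinct components of $\G(B)$ are non-adjacent and that a square of $B$ adjacent to a square of $R$ necessarily lies on $L$ (using non-degeneracy of $\Sh(B)$), so such a path leaves $B_i$ through $Z_i$ and re-enters $B$ through some $Z_j$; wherever it would otherwise use one of the $R$--$R$ edges omitted from $G$, it can be rerouted along $L$ so that it is a path in $G$ as claimed.

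Now apply Lemma~\ref{lem:cycle} to $\pi$: the cycle formed by $\pi$ and the portion of $L$ between $s_a$ and $s_b$ bounds a region whose node set $\Delta\subseteq R\cup Z$ satisfies $\Delta\subseteq\Sh(\pi)$; since $\Delta\subseteq A$ and $S^*$ is shadow-closed, $\Delta\subseteq\Sh(\pi)\cap A\subseteq S^*$, and also $\pi\subseteq S^*$. By Lemma~\ref{lem:cont} the block $Z_i$ is contiguous and, being the last block, satisfies $\ell_i\le a\le r_i$ while $b\notin[\ell_i,r_i]$; so either $b<\ell_i$ or $b>r_i$. Consider the case $b<\ell_i$ (the other case is symmetric, with $e_>,F_>,\gamma_>,\Pi_>$ in place of $e_<,F_<,\gamma_<,\Pi_<$). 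Then the two endpoints of $e_{<}=(s',s_{\ell_i})$ have indices $\ell_i-1,\ell_i\in[b,a]$, so $e_<$ lies on $\bd\Delta$ (on its $L$-part), whereas $e_>$ does not. Consequently the face $F_<$ of $G$ lying to the left of $e_<$ is contained in the bounded region $\Delta$; hence $F_<$ is bounded, its outer subtour $\gamma_<$ is defined, and because $\gamma_<\subseteq\bd F_<\subseteq\overline\Delta$ its two (necessarily $Z$-)endpoints have indices in $[b,a]$, while the $L$-run of $\bd F_<$ containing $e_<$ reaches index $\ell_i$, forcing the larger endpoint-index of $\gamma_<$ to be at least $\ell_i$; thus one endpoint of $\gamma_<$ has index in $[\ell_i,r_i]$ and so lies in $Z_i$, which means $\Pi_<$ is defined. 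Finally, $\Pi_<$ consists precisely of the $R$-squares of $\gamma_<$ other than its endpoints, all of which lie in $\overline\Delta$ but not on $L$, hence in $\Delta\cup\pi\subseteq S^*$. Therefore $B\cup\Pi_<\subseteq S^*$, as required.

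The main obstacle is the planar bookkeeping needed to produce $\pi$ and to certify that the relevant outer subtour ($\gamma_<$ when $b<\ell_i$, $\gamma_>$ when $b>r_i$) actually has an endpoint in $Z_i$, i.e.\ that $\Pi_<$ (resp.\ $\Pi_>$) is the one that is defined. Both rely on planarity of $\G(B)$ together with the nested/contiguity structure of the $Z_k$'s (Lemmas~\ref{lem:nested}--\ref{lem:existence}), and both require some care about frontier squares that are empty grid cells and about the degenerate case of $\Sh(B)$, which is reduced to the non-degenerate case by the path-splitting described in the footnote.
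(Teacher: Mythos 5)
Your proof is correct and follows essentially the same route as the paper: a minimal connecting path $\pi$ from $Z_i$ through $R\cap S^*$ to another component, the region $\Delta$ it bounds with $L$, the observation that one of $e_<,e_>$ lies on $\bd\Delta$ so the corresponding face is bounded and its outer subtour meets $Z_i$, and finally Lemma~\ref{lem:cycle} together with the shadow-closure of $S^*$ to place $\Pi$ inside $S^*$. Two small imprecisions do not affect the argument: the edges omitted from $G$ join $R$ to empty cells of $L\setminus Z$ rather than being $R$--$R$ edges, so a path with all internal vertices in $R$ needs no rerouting; and $s'$ is not in $A$ (hence not in $Z$), so it has no index $\ell_i-1$, though $e_<$ does lie on $L[b,a]$ as you use.
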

	\begin{proof}
		Recall that $B_u$ is the last connected component of $B$ to appear along $L$.
		Since $S^*$ is connected, $\agraph(S^*)$ must contain a simple path from $\nodef{B_u}$ to 
		$\nodef{B_j}$, for some $j < u$.
		Among all such paths, let $\pi=\langle v_1, v_2, \ldots, v_r\rangle$ be a minimal one, i.e., no subpath of 
		$\pi$ connects $B_u$ to 
		$\nodef{B}\setminus \nodef{B_u}$. Then $v_1 \in \nodef{B_u}$, $v_r\in \nodef{B_j}$, and 
		$v_2, \ldots, v_{r-1} \in \nodef{\compl{B}}$.
		We set $\hat\pi \coloneqq \langle q_{12}, v_2, \ldots, v_{r-1}, q_{(r-1)r}\rangle$, which is a path
		in $\rgraph$ from a vertex $z_a \in Z_u$ to~$z_b \in Z_j$, with
		$a \in [\ell_u,r_u]$ and $b \notin [\ell_u, r_u]$ (by Lemma~\ref{lem:cont}).  
		Assume that $b > r_u$, i.e., $\ell_u \le a \le r_u < b$. Let $\compl{B}_{\hat\pi}\subseteq\compl{B}$ be the 
		the set of polygons 
		corresponding to the nodes of $\hat\pi$ except its endpoints $z_a, z_b$ (same as defined in Lemma~\ref{lem:cycle}). By definition $\compl{B}_{\hat\pi} \subset S^*$.
		
		Let $\Delta$ be the region bounded by $\hat\pi\circ L[a,b]$. $\bd\Delta$ is a cycle in $\rgraph$ lying on or 
		to the right of~$L$ and~$e_{>} \in \bd\Delta$ (because $a \le r_u < b$).  Since $F_{>}$ is a face 
		of $\rgraph$, $\bd F_{>}$ lies in~$\Delta$, possibly overlapping $\bd\Delta$, 
		which implies that the polygons of $S_{>}$ lie in~$\Delta$; see Figure~\ref{fig:minimality}.
		Therefore, by Lemma~\ref{lem:cycle}, $S_{>} \subseteq \CS(\compl{B}_{\hat\pi})$.  
		Since $\compl{B}_{\hat\pi} \subset S^*$, by Lemma~\ref{lem:shadow}(ii), 
		$\CS(\compl{B}_{\hat\pi}) \subseteq S^*$.
		Therefore $S_{>} \subseteq S^*$, as required.
		
		Finally, if $b < r_u$, we similarly conclude that $S_{<} \subseteq S^*$.
	\end{proof}
	\begin{figure}[htb]
		\centering
		\includegraphics[width=0.45\textwidth]{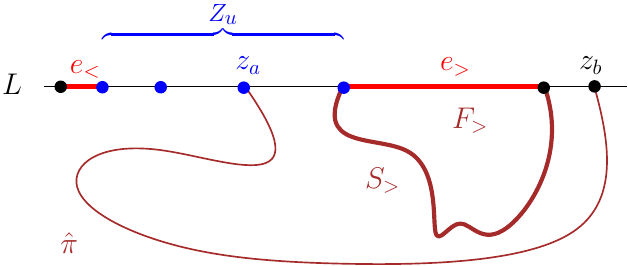}
		\caption{A partial view of $\rgraph$ in an embedding in which $L$ is stretched out on a line. The $z_i$'s are dots on $L$ with the nodes of $Z_u$ in blue. Also shown: edges $e_<$ and $e_>$ (red segments), face $F_>$, $S_>$ (thick curve), and path $\hat\pi$ from the proof of Lemma~\ref{lem:minimal}.
		}
		\label{fig:minimality}
	\end{figure}
	
	Putting everything together, we obtain the following:
	\begin{lemma}
		\label{lem:connect}
		Given a subset $B\subset A$ that satisfies (C1)--(C3),
		\subprocname{} computes, in $O(n)$ time, two sets $B \subset S_1, S_2 \subseteq A$
		such that any 
		connected partition containing $B$ contains at least one of them. Furthermore $S_1, S_2$ satisfy the property (*).
	\end{lemma}

\section{Linear-Time Partitioning Algorithm for Horizontally Monotone Grid Square Assemblies}
	\label{sec:positive_res}
	In this section, we present a restricted case of \textsc{PPCST-GRID}, for which a connected partition always exists, and show how to obtain it in linear time.
	Let $A$ be a connected \gridass{} with $n$ squares.
	A directed path of squares $s_1, \ldots, s_m$ in $\mathcal{G}(A)$ is called \emph{rightward (resp. leftward) monotone} if for all $i<m$, $s_{i+1}$ is not located to the left (resp. right) of $s_i$.
	$A$ is \emph{\monass} if either one of the following holds: (i) from every square of $A$ there is a rightward monotone path to some rightmost square of~$A$ or (ii) from every square of $A$ there is a leftward monotone path to some leftmost square of~$A$.
	For convenience, we assume in this section that a \monass{} assembly satisfies condition (i).
	
	Checking whether a given \gridass{} $A$ is \monass{} can be done in $O(n)$ time as follows.
	Let $R \subseteq A$ denote all the squares that are rightmost in their grid row, which can be found in linear time.
	Run a breadth-first search (BFS) from $R$ in $\mathcal{G}(A)$ that does not visit the right neighboring square of the currently visited square (i.e., this BFS only considers leftward monotone paths from $R$).
	It is easy to verify that $A$ is \monass{} if and only if this BFS visits all the squares in $A$.

	\begin{theorem}
		A connected \monass{} \gridass{} with $n>1$ squares can always be partitioned into two connected subassemblies by a single vertical translation. Such a partition can be found in $O(n)$ time.
		\label{thm:solvable}
	\end{theorem}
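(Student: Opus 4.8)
The plan is to exhibit an explicit connected partition by "peeling off" a suitable piece from the top of the assembly. The key geometric observation is that, under horizontal monotonicity (condition (1)), the shadow structure used in Section~\ref{sec:FPT} behaves very nicely: if we take a single column of $A$ and look at its topmost square $s$, the shadow $\Sh(\{s\}) \cap A$ is exactly the contiguous vertical run of squares of $A$ at the top of that column, and this set can be translated upward without colliding with the rest. So my first step is to consider, for each column $c$ that contains a square of $A$, the topmost square $t_c$ of $A$ in column $c$, and let $U_c \subseteq A$ be the maximal contiguous vertical run of squares of $A$ ending at $t_c$ (equivalently, $\Sh(\{t_c\})\cap A$ restricted to column $c$). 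Each such $U_c$ is connected, separable upward, and leaves a nonempty complement as long as $U_c \ne A$; the issue is whether the complement $A \setminus U_c$ is connected.

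The second step is to argue that a good choice of $c$ always exists. I would use horizontal monotonicity to set up a charging/ordering argument: since from every square there is a rightward-monotone path to a rightmost square, the columns of $A$ can be "swept" left to right, and the rightmost column $c^{*}$ of $A$ is a single contiguous vertical run (call it $C^{*}$) — any gap in $c^{*}$ would give a square with no rightward-monotone path to the rightmost column. Taking $S := C^{*}$ when $C^{*}\ne A$ gives a candidate: $S$ is connected, $S$ translates up freely (nothing of $A$ lies above column $c^{*}$'s occupied cells outside $S$ by contiguity, and there is nothing in $A$ to the right), and $A\setminus S$ is connected because every square of $A\setminus S$ still has a rightward-monotone path, and such a path, once it reaches column $c^{*}-1$ (or stops earlier), never needs to enter column $c^{*}$; hence all of $A\setminus S$ lies in columns $\le c^{*}$ and remains connected via its monotone paths. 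I would need to handle the degenerate case where $A$ occupies a single column (then the theorem is trivial: split the column into the top square and the rest, both connected since $n>1$), and the case where removing the whole last column disconnects things — which, by the argument above, cannot happen, but I would spell out why the monotone paths of $A\setminus S$ stay inside $A\setminus S$.

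The third step is the linear-time implementation. Computing $\G(A)$ and identifying the rightmost column, the squares in it, and checking contiguity all take $O(n)$ time with a single pass (e.g., bucket the squares by column, find the maximum column index, read off its squares sorted by row). Verifying connectivity of $A\setminus S$ is not even necessary to run explicitly since the proof guarantees it, but a BFS/DFS on $\G(A\setminus S)$ is $O(n)$ anyway. So the total is $O(n)$.

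The main obstacle I anticipate is the connectivity of the complement: intuitively "chop off the rightmost column," but one must be careful that (a) the rightmost column is itself connected (contiguous run), and (b) removing it does not sever a path that some square needed. Both follow from monotonicity, but the argument that $A\setminus S$ is connected requires showing that for every square of $A\setminus S$, its rightward-monotone path to a rightmost square of $A$ can be truncated/rerouted to a path lying entirely in $A\setminus S$ reaching a common vertex — essentially that $A\setminus S$ is itself a connected assembly whose squares all reach the (new) rightmost occupied column within $A\setminus S$. I would prove this by taking, for $s \in A\setminus S$, its rightward-monotone path $P$ to a rightmost square of $A$; let $s'$ be the last vertex of $P$ not in column $c^{*}$ (it exists and lies in column $c^{*}-1$, since $P$ is monotone and its final square is in $c^{*}$, and $P$ enters $c^{*}$ exactly once at the end). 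Then the prefix of $P$ up to $s'$ lies in $A\setminus S$ and connects $s$ to column $c^{*}-1$; since every square in column $c^{*}-1$ of $A\setminus S$ is connected to every other (again a contiguity/monotonicity argument, or inductively treating $A\setminus S$ as a horizontally monotone assembly of its own), $A\setminus S$ is connected. Finishing this cleanly — possibly by induction on the number of columns, peeling one column at a time — is where the real work lies.
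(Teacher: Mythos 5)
Your approach has a genuine gap: the rightmost occupied column of a horizontally monotone assembly need not be a contiguous vertical run, and even when it is, removing it can disconnect the complement. For the first issue, take $A = \{(1,1),(1,2),(1,3),(2,1),(2,3)\}$ (a ``C'' opening to the right): every square has a rightward-monotone path to column $2$ (a monotone path may freely move up or down before stepping right, since only leftward steps are forbidden), so $A$ is horizontally monotone, yet column $2$ is $\{(2,1),(2,3)\}$ with a gap at $(2,2)$. For the second issue, take $A=\{(1,1),(2,1),(3,1),(3,2),(1,3),(2,3),(3,3),(3,4)\}$ (two horizontal bars welded only to a rightmost column): this is horizontally monotone and column $3$ is the contiguous run $\{(3,1),\dots,(3,4)\}$, but $A\setminus S=\{(1,1),(2,1)\}\cup\{(1,3),(2,3)\}$ is disconnected. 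Your fallback of truncating each rightward-monotone path at column $c^{*}-1$ does not repair this, because column $c^{*}-1$ of $A\setminus S$ is itself not contiguous; the intended induction ``peel one column at a time'' breaks at the first step.

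The paper's proof sidesteps both problems by working from the \emph{left} and removing a single square rather than a whole column: it takes $s$ to be the top square of the leftmost column, so $s$ has degree at most two in $\G(A)$ and $A\setminus\{s\}$ has at most two components. Letting $C'$ be the component containing the square $s'$ directly below $s$ and $C:=A\setminus C'$, it shows $C$ is an upward-separable connected partition via a topological separation argument: if some $t\in C$ were blocked from above by $t'\in C'$, concatenating a path in $C$ from $s$ to $t$ with the guaranteed rightward-monotone path from $t$ to a rightmost square, and extending by horizontal rays, yields a curve disjoint from $C'$ that separates $s'$ from $t'$, contradicting the connectivity of $C'$. The monotonicity hypothesis is used only to produce the rightward-monotone tail of this separating curve, not to control the shape of any particular column.
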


	\begin{proof}
		\newcommand{\amins}{\ensuremath{A \setminus S}}
		\newcommand{\aminC}{\ensuremath{A \setminus C}}
		We show how to obtain a connected subassembly $S \subset A$ that can be rigidly translated upward without colliding with $A \setminus S$, such that $A \setminus S$ is also connected.
		Let $\ell$ denote the top square on the leftmost column of $A$ (namely the leftmost column of the grid that contains a square of $A$).
		If $A\setminus\set{\ell}$ is connected, then $\set{\ell}$ is a valid choice for $S$, since both subassemblies are connected and there is nothing blocking a translation of $\ell$ upward to infinity, and we are done.
		Otherwise, let $C$ be the top connected component of the leftmost column of $A$.
		If $A\setminus C$ is connected, then as before, we are done.
		If that is not the case, then let $D$ denote the set of squares of $A$ that lie in the column adjacent to the leftmost column of $A$.
		Since $A\setminus C$ is not connected, there are at least two squares in $D$ that are adjacent to $C$.
		Among all such squares, let $s$ and $s'$ be the two topmost squares, where $s$ is above $s'$; see Figure~\ref{fig:solvable}.
		The square $s$ must be adjacent to $\ell$, as otherwise $A\setminus\set{\ell}$ is connected, and $\set{\ell}$ is a valid choice for $S$.
		By the same argument, $s'$ cannot be adjacent to $s$.
		
		Let $S$ be the connected component of $\mathcal{G}(A\setminus\set{\ell})$ containing $s$ and let $S'=A\setminus S$, which is also a connected subassembly. We have $s' \in S'$, as otherwise $s' \in S$ and then $s$ and $s'$ must lie on a cycle in $\mathcal{G}(A)$ that $\ell$ is also part of.
		This makes $\set{\ell}$ a valid choice for $S$, which we assumed is not the case.
		We claim that $S$ can be translated upward without colliding with $S'$.
		We therefore assume to the contrary that there is a square $t \in S$ that is located below a square $t' \in S'$ in the same grid column.
		Since $A$ is \monass{}, there is a rightward monotone path from each of $t$ and $t'$ to some rightmost square in $A$.
		Let $P$ and $P'$ denote these respective paths and let $r, r' \in R$ be their respective endpoints.
		The square $r$ must be located on a lower row than the row $r'$ is in, as otherwise $P$ and $P'$ are not disjoint, which would contradict the disjointness of $S$ and $S'$.
		Let $Q \subseteq S$ be the union of $P$ and the path from $s$ to $t$ that exists in $S$ and let $Q' \subseteq S'$ be defined similarly for $P',s',t'$.
		$Q'$ cannot contain any square that is to the right of $r$ on the same grid row or any square that is to the left of $C$ (as such squares are not part of $A$).
		Therefore, the only way for $Q'$ to connect $s'$ with $r'$ is to go through $Q$, which contradicts the disjointness of $S$ and $S'$.
		In conclusion, $S$ can be translated upward to infinity without colliding with $S'$, and both $S$ and $S'$ are connected.
		
		\begin{figure}[htb]
			\centering
			\includegraphics[width=0.26\textwidth]{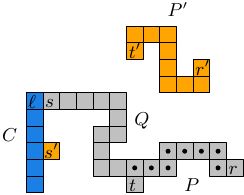}
			\caption{The final contradiction arising in the proof of Theorem~\ref{thm:solvable}, where there is no way to connect $s'$ and $r'$ in $S'$.
				$C$ is blue, $Q$ is gray, and squares belonging to $P$ are marked by dots.}
			\label{fig:solvable}
		\end{figure}
		
		Computing such a partition involves identifying $\ell$ and $C$ and computing the connected components of $\mathcal{G}(A\setminus\set{\ell})$ and $\mathcal{G}(A\setminus{C})$, all of which take $O(n)$ time.
		If the adjacency graph $\mathcal{G}(A)$ is not given, we construct it as follows.
		Since $A$ is connected, we can normalize it to lie in the square subgrid $[0,n]\times[0,n]$ and then use two lexicographic radix sorts to construct $\mathcal{G}(A)$ in $O(n)$ time.
		
	\end{proof}

\section{Conclusion}
We have shown that given a planar assembly, finding a connected partition for a given direction is hard.
In assembly planning one may not require the separating motion to occur along a specific direction, but rather ask whether a connected partition exists for any direction.
This problem remains NP-complete since our construction for \textsc{PPCST} can be easily modified so that a partition can only occur along the vertical direction (e.g., by having the positive root clause be constrained to move up due to sliding contact with the negative root clause).
The construction for \textsc{PPCST-GRID} can be similarly made to only have a partition in the vertical direction. This requires modifying the quasi-polygon $B$ and its symmetric counterpart (located below the variable row) to emulate two rigid parts that enclose the construction and constrain each other's motion.
\textsc{PPCST-GRID} remains NP-complete in three dimensions, in which case the parts are unit cubes each occupying a cell of the spatial unit grid, since we can give the squares in our construction volume along the $z$-axis.
However, the complexity of deciding whether a connected partition exists for any direction for unit-grid cubes remains open, since it does not immediately follow from the aforementioned construction. %
The same question (i.e., arbitrary direction) for general polyhedral parts is NP-complete, because we can again modify the construction for \textsc{PPCST} to restrict the only possible partition direction to be \dirup{}.

Our results indicate that the overall shape of the assembly, i.e., the shape of the union of its constituent parts (as opposed to the shape of individual parts), plays an important role in the difficulty of the connected-partitioning problem.
For example, "hooks" are a crucial part of our gadgets in both the polygonal and square construction;
we showed that the gadgets' functionality can be kept the same regardless of whether the hooks are rigid or made of squares that resemble a hook.
This suggests that more results can be obtained by focusing on the implications of the overall shape of the assembly, as we explored for \monass{} assemblies in Section~\ref{sec:positive_res}.

We remark that an FPT algorithm for connected-assembly-partitioning does not solve the original motivating problem of finding a complete (dis)assembly sequence with the connectivity constraint.
This is true because it is possible to have a subassembly resulting from a connected partition to not have a connected partition of its own (see, e.g.,~\cite[Figure 4]{DBLP:journals/ral/ManzoorSLKKB17}).
Therefore, finding a complete sequence with the connectivity constraint requires further investigation.

\section*{Acknowledgments}
We thank Nancy M. Amato,
Stav Ashur,
Gali Bar-On,
Gil Ben-Shachar,
Kostas Bekris,
J.~Frederico Carvalho,
Erik Demaine,
Jayson Lynch, and
Jay Tenenbaum
for useful discussions.
We also thank the reviewers for their helpful comments on the paper.

\bibliographystyle{siamplain}
\bibliography{references}

\end{document}